\newcommand{\calF}{\mathcal{F}}
\newcommand{\calP}{\mathcal{P}}
\newcommand{\calC}{\mathcal{C}}
\newcommand{\calN}{\mathcal{N}}
\newcommand{\calG}{\mathcal{G}}
\newcommand{\sfel}{\mathsf{el}}
\newcommand{\sfl}{\mathsf{\ell}}
\newcommand{\sfc}{\mathsf{c}}
\newcommand{\sfr}{\mathsf{r}}
\newcommand{\sfer}{\mathsf{er}}
\newcommand{\bbR}{\mathbb{R}}
\newcommand{\height}{\mbox{\rm height}}
\newcommand{\suc}{\mbox{\rm succ}}
\newcommand{\best}{\mbox{\rm B}}
\newcommand{\ch}{\mbox{\rm ch}}
\newcommand{\pr}{\mbox{\rm pr}}
\newcommand{\msucceq}{\mathop{\stackrel{m}{\succeq}}}
\newcommand\DoubleLine[5][3pt]{%
  \path(#2)--(#3)coordinate[at start](h1)coordinate[at end](h2);
  \draw[#4]($(h1)!#1!90:(h2)$)--($(h2)!#1!-90:(h1)$);
  \draw[#5]($(h1)!#1!-90:(h2)$)--($(h2)!#1!90:(h1)$);
}
\newtheorem{theorem}{Theorem}
\newtheorem{lemma}[theorem]{Lemma}
\newtheorem{proposition}[theorem]{Proposition}
\newtheorem{cor}[theorem]{Corollary}
\newtheorem{definition}{Definition}
\newtheorem{example}{Example}
\numberwithin{equation}{section}
\title{Hedonic Games with Graph-restricted Communication}
\author[]{Ayumi Igarashi}
\address[Ayumi Igarashi]{Department of Computer Science, University of Oxford, U.K.}
\email[]{ayumi.igarashi@cs.ox.ac.uk}
\author[]{Edith Elkind}
\address[Edith Elkind]{Department of Computer Science, University of Oxford, U.K.}
\email[]{elkind@cs.ox.ac.uk}
\date{}
\begin{document}
\maketitle

\begin{abstract}
We study hedonic coalition formation games in which cooperation among the players is restricted by a graph 
structure: a subset of players can form a coalition if and only if they are connected in the given 
graph. We investigate the complexity of finding stable outcomes in such games, for several notions
of stability. In particular, we provide an efficient algorithm that finds an individually stable
partition for an arbitrary hedonic game on an acyclic graph.
We also introduce a new stability concept---in-neighbor stability---which is tailored for our setting.
We show that the problem of finding an in-neighbor stable outcome admits a polynomial-time 
algorithm if the underlying graph is a path, but is NP-hard for arbitrary trees
even for additively separable hedonic games; for symmetric additively separable games
we obtain a PLS-hardness result.
\end{abstract}

\section{Introduction}\label{sec:intro}
\noindent
In human and multiagent societies, agents often need to form coalitions in order to achieve their goals.
The coalition formation process is guided by agents' beliefs about the performance 
of each potential coalition. Many important aspects of coalition formation can be studied using the formalism of 
{\em hedonic games}~\cite{Banerjee2001,Bogomolnaia2002}. In these games, each agent has preferences
over all coalitions that she can be a part of, and an outcome is a partition of agents into coalitions.
An important consideration in this context is {\em coalitional stability}: an outcome should be resistant
to individual/group deviations, with different types of deviations giving rise to different notions
of stability (such as core stability, individual stability, Nash stability, etc.; see the recent survey
of Aziz and Savani~\cite{Aziz15} for an overview).

The standard model of hedonic games does not impose any restrictions on which coalitions may form.
However, in reality we often encounter constraints on
coalition formation. Consider, for instance, an international network of natural gas pipelines. It seems unlikely 
that two cities disconnected in the network would be able to coordinate a trading agreement without any help from  
intermediaries. Such restrictions on communication structure can be naturally described by undirected graphs,
by identifying agents with nodes, communication links with edges, and feasible coalitions with connected subgraphs. 
In the context of cooperative transferable utility games this model was proposed in the seminal paper
of Myerson~\cite{Myerson1977}, and has received a considerable amount of attention since then.
In contrast, very little is known about hedonic games
with graph-restricted communication, though some existing results for general non-transferable utility
games have implications for this setting. In particular, the famous result of Demange~\cite{Demange2004}
concerning stability in cooperative games on trees extends to non-transferable utility games, and implies
that every hedonic game whose communication structure is acyclic admits a core stable partition
(we discuss this result in more detail in Section~\ref{sec:core}). However, no attempt has
been made to obtain similar results for other hedonic games solution concepts, or to explore
algorithmic implications of constraints on the communication structure (such as acyclicity
or having a small number of connected subgraphs) for computing the core and other solutions. The goal of this paper is to make the first step towards filling this gap.

\smallskip

\noindent{\bf Our contribution\ } Inspired by Demange's work, we focus on hedonic games on acyclic graphs.
We consider several well-studied notions of stability for hedonic games, 
such as individual stability, Nash stability, core stability and strict core stability
(see Section~\ref{sec:prelim} for definitions), and ask two questions: 
(1) does acyclicity of the communication structure guarantee the existence of a stable outcome?
(2) does it lead to an efficient algorithm for computing a stable outcome, and if not, 
are there additional constraints on the communication structure that can be used to obtain such as algorithm?
We remark that, in general, to represent the preferences of a player in an $n$-player hedonic game,
we need to specify $2^{n-1}(2^{n-1}-1)/2$ values, which may be problematic if we are interested 
in algorithms whose running time is polynomial in $n$. We consider two approaches to circumvent 
this difficulty: (a) working in the oracle model, where an algorithm may submit a query
of the form $(i, X, Y)$ where $X$ and $Y$ are two coalitions that both contain $i$,
and learn in unit time whether $i$ prefers $X$ to $Y$, $Y$ to $X$ or is indifferent between them; 
(b) considering specific
succinct representations of hedonic games, such as additively separable hedonic games~\cite{Bogomolnaia2002},
which can be described using $n(n-1)$ numbers.

We observe that Demange's algorithm for the core runs in time that is polynomial in the number of connected
subtrees of the underlying graph $G$ (in the oracle model), 
and use similar ideas to obtain an algorithm for finding an outcome that is both
core stable and individually stable as well as an algorithm for finding a Nash stable outcome (if it exists).
The running time of these algorithms can be bounded in the same way; 
in particular, they run in polynomial time when $G$ is a path. However, we show that when $G$ is a star,
finding a core stable, strictly core stable or Nash stable outcome is NP-hard, even if we restrict ourselves
to very simple subclasses of additively separable hedonic games. For symmetric additively separable hedonic
games, we show that the PLS-hardness result for Nash stability \cite{Gairing2010} holds even if 
$G$ is a star.

In contrast, acyclicity turns out to be sufficient for individual stability: we show that every hedonic game
on an acyclic graph admits an individually stable partition, and, moreover, such a partition
can be computed in time polynomial in the number of players (in the oracle model). 
We believe that this result is remarkable,
since in the absence of communication constraints finding an individually stable outcome
is hard even for (symmetric) additively separable hedonic games~\cite{Sung2010,Gairing2010},
and finding a Nash stable outcome in such games remains hard even for games on stars (Section~\ref{sec:INS}).

Another contribution of our paper is a new stability concept that is tailored specifically to hedonic games
on graphs, and captures the intuition that, to join a group, a player should be approved by the members
of the group who know him. The resulting solution concept, which we call {\em in-neighbor stability},
lies between Nash stability and individual stability. However, we show that from the algorithmic
perspective it behaves similarly to Nash stability; in particular, finding an in-neighbor stable outcome
is NP-hard for additively separable hedonic games on stars and PLS-hard for symmetric additively separable hedonic 
games on stars. Our computational complexity results are summarized in Table~\ref{tab:table}.

\smallskip

\noindent{\bf Related work\ }
Sung and Dimitrov~\cite{Sung2010} were the first to consider complexity issues 
in additively separable hedonic games (ASHGs); they prove that it is NP-hard to determine if a game 
admits a core stable, strict core stable, individually stable, or Nash sable outcome
(see also~\cite{petersE}).
Aziz et al.~\cite{Aziz2013} 
extend the first two or these results to symmetric additively separable hedonic games (SASHGs).
While SASHGs always admit a Nash stable or individually stable partition~\cite{Bogomolnaia2002,burani},
finding one may still be difficult: Gairing and Savani~\cite{Gairing2010} prove that finding such partitions is 
PLS-hard (PLS-hardness is a complexity class for total search problems, see~\cite{Schaffer1991}).
The running time of some of our algorithms is polynomial in the number of connected coalitions;
see the recent work of Elkind~\cite{DBLP:conf/wine/Elkind14} for a characterization of graph families
for which this quantity is polynomial in the number of nodes. Some
papers~\cite{bilo1,bilo2,peters}
use the phrase ``hedonic game on a tree'' to refer to a hedonic game where each player $i$ has
a value $v_i(j)$ for every other player $j$, and pairs $\{i, j\}$ such that $v_i(j)\neq 0$ or 
$v_j(i)\neq 0$ form a tree; the preference relation of player $i$ is computed 
based on the values $v_i(j)$: the value of a coalition $S$, $i\in S$, could be 
$\sum_{j\in S\setminus \{i\}}v_i(j)$ (this corresponds to ASHGs) or 
$\frac{1}{|S|}\sum_{j\in S\setminus\{i\}}v_i(j)$ (such games are known as {\em fractional hedonic games})
This framework is different
from ours: we allow preferences that are not derived from values assigned to individual players, 
and in the additively separable case we allow non-adjacent players to have a non-zero 
value for each other. 

\begin{table}
\begin{center}
{\tabcolsep=2mm
\begin{tabular}{ |c|c c c c c c| }
\hline
	& \multicolumn{2}{c}{Acyclic graphs} 	 & \multicolumn{3}{c}{Stars} &Paths \\ 
\hline
	&Arbitrary & Additive	& Additive	& S-additive 	& S-enemy &	 Arbitrary\\ 
\hline
SCR 	& - 		&NP-h  &NP-h 		& NP-h (Th.\ref{thm:SCR:additive})		& NP-h$^*$ (Th.\ref{thm:SCR:enemy})	& ?\\ 
CR 	& - 		& NP-h$^*$ & NP-h$^*$ 		& NP-h$^*$ 		& NP-h$^*$ (Th.\ref{thm:CR}) 	& P (\cite{Demange2004}) \\ 
NS 	& - 		& NP-c & NP-c (Th.\ref{thm:NS})	& PLS-c (Th.\ref{thm:sym:NS}) & P (Prop.\ref{prop:sym:enemy:NS}) 	& P (Th.\ref{thm:NS_NIS_IRNIS})\\ 
INS 	& - 		& NP-c & NP-c (Th.\ref{thm:INS})	& PLS-c (Th.\ref{thm:sym:INS})	& P 		& P (Th.\ref{thm:NS_NIS_IRNIS})\\ 
IR-INS& -  	& NP-c (Th. \ref{thm:IRINS}) 		& P (Prop.\ref{prop:additive:IRINS})  	& P 		& P 		& P (Th.\ref{thm:NS_NIS_IRNIS})\\ 
IS& P (Th.\ref{thm:IS})    & P 	& P 		& P 		& P 		& P\\ \hline
\end{tabular}
}
\end{center}
\caption{Complexity of computing stable outcomes for hedonic games on acyclic graphs. 
The top row corresponds to restrictions on graphs; the second row from the top indicates restrictions on preference 
profiles. The positive results for unrestricted preferences are in the oracle model.
To avoid dealing with representation issues, 
when a problem is NP-hard for additively separable games on stars, 
we do not consider its complexity for unrestricted preferences (indicated by `-').
The hardness result marked with $^*$ holds with respect to Turing reductions.
When no reference is given, the result follows trivially from other results in the table.
The results for paths hold for all trees with $n$ nodes and $\mathrm{poly}(n)$
connected subtrees.
\label{tab:table}}
\end{table}

\section{Preliminaries}\label{sec:prelim}
\noindent
We start by introducing basic notation and definitions of hedonic games and graph theory. 
\smallskip

\noindent{\bf Hedonic games \ }
A {\it hedonic game} is a pair $(N,(\succeq_{i})_{i \in N})$ where $N$ is a finite set of players and each 
$\succeq_{i}$ is a complete and transitive preference relation over the nonempty subsets of $N$ including player $i$.
The subsets of $N$ are referred to as {\em coalitions}.
We let $\calN(i)$ denote the collection of all coalitions containing $i$.
We call a coalition $X \subseteq N$ {\it individually rational} if $X \succeq_{i} \{i\}$ for all $i \in X$. Let 
$\succ_{i}$ denote the strict preference derived from $\succeq_{i}$, i.e., 
$X \succ_i Y$ if $X \succeq_i Y$, but $Y \not \succeq_{i} X$. 
Similarly, let $\sim_{i}$ denote the indifference relation induced by $\succeq_{i}$, i.e., 
$X \sim_{i} Y$ if $X \succeq_{i} Y$ and $Y \succeq_{i} X$.

An important subclass of hedonic games is {\em additively separable games}. 
These games model situations where each player has a specific value for every other player, and 
ranks coalitions according to the total value of their members~\cite{Bogomolnaia2002}. Formally, a preference profile 
$(\succeq_{i})_{i \in N}$ is said to be {\it additively separable} if there exists a utility matrix $U:N \times N 
\rightarrow \bbR$ such that for each $i \in N$ and each $X,Y\in\calN(i)$ 
we have $X \succeq_{i} Y$ if and only if 
$\sum_{j \in X}U(i,j) \geq \sum_{j \in Y}U(i,j)$~\cite{Bogomolnaia2002}. Without loss of generality, we will assume that
$U(i,i)=0$ for each $i \in N$. An additively separable preference is said to be {\it symmetric} if the utility matrix 
$U:N \times N \rightarrow \bbR$ is symmetric, i.e., $U(i,j)=U(j,i)$ for all $i,j \in N$. 
Dimitrov et al.~\cite{Dimitrov2006} studied a subclass of additively separable preferences, 
which they called {\it enemy-oriented preferences}. Under these preferences 
each player considers every other player to be either a friend or an enemy, and has strong aversion towards her 
enemies: $U(i,j)\in \{1,-|N|\}$ for each $i,j \in N$ with $i \neq j$.

An {\em outcome} of a hedonic game is a partition of players into disjoint coalitions. Given a partition $\pi$ of $N$ and 
a player $i \in N$, let $\pi(i)$ denote the unique coalition in $\pi$ that contains $i$.
The first stability concept we will introduce is {\it individual rationality}, 
which is often considered to be a minimum requirement that solutions should satisfy. 
A partition $\pi$ of $N$ is said to be {\it individually rational} if all 
players weakly prefer their own coalitions to staying alone, i.e.,  $\pi(i) \succeq_{i} \{i\}$
for all $i\in N$.

The {\em core} is one of the most studied solution concepts in hedonic games \cite{Dreze1980,Banerjee2001,Bogomolnaia2002}.
A coalition $X \subseteq N$ {\it strongly blocks} a partition $\pi$ of $N$ if $X \succ_{i} \pi(i)$ for all $i \in X$; 
it {\it weakly blocks} $\pi$ if $X \succeq_{i} \pi(i)$ for all $i \in X$ and $X \succ_{j} \pi(j)$ for some $j \in X$.
A partition $\pi$ of $N$ is said to be {\it core stable} (CR) if no coalition $X \subseteq N$ 
strongly blocks $\pi$; it is said to be {\it strictly core stable} (SCR) 
if no coalition $X \subseteq N$ weakly blocks $\pi$.

We will also consider stability notions that capture resistance to
deviations by individual players. Consider a player $i \in N$ and a pair of coalitions 
$X\not\in\calN(i)$, $Y\in\calN(i)$. 
A player $i$ {\it wants to deviate} from $Y$ to $X$ if $X \cup \{i\} \succ_{i} Y$. 
A player $j \in X$ {\it accepts} a deviation of $i$ to $X$ 
if $X\cup \{i\} \succeq_{j} X$. 
A deviation of $i$ from $Y$ to $X$ is 

\begin{itemize}
\item an {\it NS-deviation} if $i$ wants to deviate from $Y$ to $X$.
\item an {\it IS-deviation} if it is an NS-deviation and all players in $X$ accept it.
\end{itemize}


\noindent A partition $\pi$ is called {\em Nash stable} (NS) (respectively, {\em individually stable} (IS)) 
if no player $i\in N$ has an NS-deviation
(respectively, an IS-deviation) from $\pi(i)$ to another coalition
$X\in \pi$ or to $\emptyset$.

We have the following containment relations among these classes of outcomes:
SCR$\,\subseteq\,$CR, SCR$\,\subseteq\,$IS, NS$\,\subseteq\,$IS. However, a core stable outcome
need not be individually stable, and an individually stable outcome may fail to be in the core.


\smallskip

\noindent{\bf Graphs and digraphs\ }
An {\em undirected graph}, or simply a {\em graph}, is a pair $(N,L)$, 
where $N$ is a finite set of {\em nodes} and 
$L\subseteq \{\, \{i, j\}\mid i, j\in N,i \neq j \,\}$ is a collection of {\em edges} between nodes. 
Given a set of nodes $X$, the {\em subgraph of $(N, L)$ induced by $X$}
is the graph $(X, L_X)$, where $L_X=\{\{i, j\}\in L\mid i, j\in X\}$.

For a graph $(N,L)$, a sequence of distinct nodes $(i_1, i_2, \ldots, i_k)$, $k\geq 2$, 
is called a {\it path} in $L$ if $\{i_h,i_{h+1}\} \in L$ for $h=1,2,\ldots,k-1$. 
A path $(i_1, i_2, \ldots, i_{k})$, $k\geq 3$, is said to be a {\it cycle} in $L$ if $\{i_{k},i_{1}\} \in L$.
A graph $(N,L)$ is said to be a {\it forest} if it contains no cycles.
A subset $X \subseteq N$ is said to be {\it connected in $(N, L)$} if for every pair of distinct nodes 
$i, j \in X$ there is a path between $i$ and $j$ in $L_X$. 
The collection of all connected subsets of $N$ in $(N,L)$ is denoted by $\calF_{L}$; 
also, we write $\calF_{L}(i)=\calF_L\cap \calN(i)$. 
By convention, we assume that $\emptyset \not \in \calF_{L}$.
A forest $(N,L)$ is said to be a {\it tree} if $N$ is connected in $(N, L)$.
A tree $(N,L)$ is called a {\it star} if there exists a central node $s \in N$ 
such that $L=\{\, \{s, j\} \mid j \in N\setminus\{s\} \,\}$.
A subset $X\subseteq N$ of a graph $(N,L)$ is said to be a {\it clique} if 
for every pair of distinct nodes $i, j\in X$ we have $\{i, j\}\in L$.
\par
A {\it directed graph}, or a {\it digraph}, is a pair $(N, A)$ where $N$ is a finite set of nodes 
and $A \subseteq N \times N$. The elements of $A$ are called the {\it arcs}. 
A sequence of distinct nodes $(i_1, i_2, \ldots, i_k)$, $k\geq 2$, is called a {\it directed 
path} in $A$ if $(i_h,i_{h+1}) \in A$ for $h=1,2,\ldots,k-1$.
Given a digraph $(N, A)$, let $L(A)=\{\, \{i,j\} \mid (i,j) \in A \,\}$: 
the graph $(N, L(A))$ is the {\em undirected version} of $(N, A)$. 
A digraph $(N, A)$ is said to be a {\it rooted tree} if $(N,L(A))$ 
is a tree and each node has at most one arc entering it. A rooted tree has exactly one node 
that no arc enters, called the {\it root}, and there exists a unique directed path from the root to every node of $N$. 
\par
Let $(N, A)$ be a rooted tree. We say that a node $j \in N$ is a {\it parent} of $i$ in $A$ if $(j,i) \in A$. We 
denote by $\pr(i,A)$ the unique parent of $i$ in $A$. A node $j \in N$ is called a {\it successor} of $i$ in $A$ if 
there exists a directed path from $i$ to $j$ in $A$.  We write
\[
\suc(i,A)=\{i\}\cup\{\, j \in N \mid j~\mbox{is a successor of}~i~\mbox{in}~A \,\}.
\]
A node $i \in N$ is called a {\it child} of $X \subseteq N$ in $A$ if $i \not \in X$ and $\pr(i,A) \in X$. 
We write 
\[
\ch(X,A)=\{\, i \in N \mid i \not \in X ~\mbox{and}~ \pr(i,A) \in X \,\}.
\]
The {\it height} of a node $i \in N$ of $(N,A)$ is defined inductively as follows: 
\[
\height(i,A):=
\begin{cases}
0~~\mbox{if $\suc(i,A)=\{i\}$},\\
1+\max \{\, \height(j,A)\mid j \in \suc(i,A)\setminus \{i\}\,\}\\
\quad~\mbox{otherwise}.
\end{cases}
\]

\section{Our model}\label{sec:model}
\noindent
The goal of this paper is to study hedonic games
where agent communication is constrained by a graph.

\begin{definition}
A {\em hedonic game with graph structure}, or a {\em hedonic graph game}, is a 
triple $(N,(\succeq_{i})_{i \in N},L)$ where $(N,(\succeq_{i})_{i \in N})$ is a hedonic game, and 
$L \subseteq \{\, \{i,j\} \mid i, j \in N,i \neq j \,\}$ is the set of communication links between players. 
A coalition $X \subseteq N$ is said to be {\em feasible} if it is connected in $(N,L)$.
\end{definition}
If $(N,L)$ is a clique, a hedonic graph game $(N,(\succeq_{i})_{i \in N},L)$ is equivalent to 
the ordinary hedonic game $(N,(\succeq_{i})_{i \in N})$.

A partition $\pi$ of $N$ is said to be {\em feasible} if $\pi \subseteq \calF_{L}$.
An {\em outcome} of a hedonic graph game is a feasible partition.
The standard definitions of stability concepts (see Section~\ref{sec:prelim})
can be adapted to graph games in a straightforward manner.
Specifically, we say that a coalitional deviation is {\em feasible}
if the deviating coalition itself is feasible; an individual deviation
where player $i$ joins a coalition $X$ is {\em feasible} if $X\cup\{i\}$
is feasible. Now, we modify the definitions in Section~\ref{sec:prelim}
by only requiring stability against feasible deviations. 


We use the notation $(N,U,L)$ to denote an additively separable graph game with 
utility matrix $U:N\times N \rightarrow \bbR$.

\begin{example}\label{ex:Parliament}
{\em
Consider the coalition formation problem in a parliament consisting of three parties:
left-wing ($\sfl$), centrist ($\sfc$), and right-wing ($\sfr$). Then
$\sfl$ and $\sfr$ cannot form a coalition without $\sfc$.
We describe this scenario as an additively separable graph game $(N,U,L)$ where $N=\{\sfl,\sfc,\sfr\}$, 
$L=\{\{\sfl,\sfc\},\{\sfc,\sfr\}\}$, and the utility matrix $U$ is given by
\begin{align*}
&U(\sfl, \sfc)=1, U(\sfl, \sfr)=-2, U(\sfc, \sfl)=2,\\
&U(\sfc, \sfr)=0, U(\sfr, \sfc)= 2, U(\sfr, \sfl)=0.
\end{align*}

The resulting preference profile is as follows: 
\begin{align*}
&\sfl~:~ \{\sfl,\sfc\} \succ_{\sfl} \{\sfl\} \succ_{\sfl}\{\sfl,\sfc,\sfr\} \succ_{\sfl} \{\sfl,\sfr\}\\
&\sfc~:~ \{\sfl,\sfc,\sfr\} \sim_{\sfc} \{\sfl,\sfc\} \succ_{\sfc} \{\sfc,\sfr\} \sim_{\sfc} \{\sfc\}\\
&\sfr~:~ \{\sfl,\sfc,\sfr\} \sim_{\sfr}\{\sfc,\sfr\} \succ_{\sfr} \{\sfl,\sfr\} \sim_{\sfr} \{\sfr\}
\end{align*}
The individually rational feasible partitions of this game are 
$\pi_1=\{\{\sfl,\sfc\},\{\sfr\}\}$, $\pi_2=\{\{\sfl\},\{\sfc,\sfr\}\}$, and $\pi_3=\{\{\sfl\},\{\sfc\},\{\sfr\}\}$. 
The partition $\pi_{1}$ is both core stable and individually stable. 
However, there is no Nash stable partition in this game: 
in $\pi_1$, player $\sfr$ wants to join $\{\sfl,\sfc\}$, while in 
$\pi_2$ and $\pi_{3}$, player $\sfc$ wants to join $\{\sfl\}$.
}
\end{example}
In general, hedonic games may fail to have partitions that are core stable or individually stable~\cite{Banerjee2001,Bogomolnaia2002};
consequently, this is also the case for hedonic graph games on general graphs. 
Therefore, in what follows we mostly focus on graph games where
the underlying graph is acyclic. 

\section{Individual Stability}
\noindent
The main contributions of this section are (1) an efficient algorithm for finding an 
individually stable feasible partition in a hedonic graph game whose underlying graph is a forest;
(2) a proof that in the presence of cycles the existence of an IS 
feasible partition is not guaranteed.
 
\begin{theorem}\label{thm:IS}
Suppose that we are given oracle access to the preference relations $\succeq_i$ of all players
in a hedonic graph game $\calG=(N,(\succeq_{i})_{i \in N},L)$, where $(N,L)$ is a forest.             
Then we can find an
individually stable feasible outcome of $\calG$ in time polynomial in~$|N|$.
\end{theorem}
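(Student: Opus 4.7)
My plan is to construct an individually stable partition of the forest $(N,L)$ by a bottom-up algorithm. Since the connected components of the forest can be handled independently, I fix a tree $T$ and root it at an arbitrary node. The algorithm visits nodes in post-order, maintaining a partition of the already-processed nodes, so that when $v$ is visited, the partition of the set of proper descendants of $v$ is already in place.

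At node $v$ with children $c_1,\ldots,c_k$ and their current coalitions $C_1,\ldots,C_k$, the algorithm chooses a subset $S\subseteq\{1,\ldots,k\}$ and sets $v$'s coalition to $X_S=\{v\}\cup\bigcup_{i\in S}C_i$, while the unabsorbed coalitions $\{C_j:j\notin S\}$ are kept intact. The rule for picking $S$ enforces three conditions that together rule out every feasible IS-deviation involving $v$ or an edge $\{v,c_j\}$: (a) every member of $X_S$ weakly prefers $X_S$ to its previous coalition, so that the merge is mutually acceptable; (b) for each $j\notin S$, either $X_S\succeq_v C_j\cup\{v\}$, or some member of $C_j$ strictly prefers $C_j$ to $C_j\cup\{v\}$; and (c) for each $j\notin S$, either $C_j\succeq_{c_j} X_S\cup\{c_j\}$, or some member of $X_S$ strictly prefers $X_S$ to $X_S\cup\{c_j\}$. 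I would prove that a valid $S$ always exists, and construct it by an iterative procedure that starts from $S=\emptyset$ and greedily adds indices in an order guided by $v$'s preferences, using oracle queries to verify (a)--(c) at each step. Since each node is visited once and each step uses $\mathrm{poly}(|N|)$ oracle queries, the overall complexity is polynomial in $|N|$.

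The main obstacle is to ensure that the local conditions imposed at $v$ remain valid after $v$'s coalition is later merged with an ancestor's coalition at a higher level of processing. The extension from $X_S$ to some $X'\supsetneq X_S$ changes the coalition of every member of $X_S$, and previously blocked IS-deviations could in principle become attractive. The central argument overcoming this difficulty is an inductive use of the mutual-acceptance condition (a) combined with the transitivity of each $\succeq_u$: enforcing (a) at every merge produces a chain $X'\succeq_u X_S\succeq_u C_i\succeq_u\cdots$, so every player weakly prefers its current coalition to every coalition it was part of earlier, and hence to every within-subtree alternative that had been blocked by preference at an earlier stage. A more delicate variant of this argument, together with a careful choice of the rejectors in conditions (b) and (c) so that their rejection persists under any mutually acceptable extension, is needed to guarantee that deviations across the boundary edges $\{v,c_j\}$ remain blocked in the final partition; this is the key technical step of the proof.
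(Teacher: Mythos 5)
Your overall architecture (root the tree, process nodes bottom-up, keep a partition of each processed subtree, induct on height) matches the paper's, but the local rule you impose at each node $v$ leaves a genuine gap. Conditions (a)--(c) only guard the edges $\{v,c_j\}$ for $j\notin S$; they say nothing about players lying strictly inside an \emph{absorbed} subtree. If $i\in S$, every player $w\notin C_i$ that is adjacent to $C_i$ inside the subtree of $c_i$ becomes adjacent to the enlarged coalition $X_S$ and may acquire an IS-deviation into $X_S$ even though it had none into $C_i$: with arbitrary preferences there is no monotonicity linking anyone's opinion of $C_i$ versus $C_i\cup\{w\}$ to their opinion of $X_S$ versus $X_S\cup\{w\}$. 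A three-node path $v-c-d$ already breaks the scheme: take $\{v,c,d\}\succ_d \{c,d\}\sim_d\{d\}$, $\ \{v,c,d\}\succeq_c\{v,c\}\succ_c\{c\}\succ_c\{c,d\}$, $\ \{v,c,d\}\succeq_v\{v,c\}\succ_v\{v\}$. Processing $c$ forces $S=\emptyset$ (condition (a) fails for $S=\{d\}$ since $\{c\}\succ_c\{c,d\}$), giving $\{\{c\},\{d\}\}$, which is indeed IS in the subtree. At $v$, the only valid choice is $S=\{c\}$ (for $S=\emptyset$ condition (b) fails), and conditions (b),(c) are vacuous, so you output $\{\{v,c\},\{d\}\}$ --- but $d$ strictly wants to join $\{v,c\}$ and both $v$ and $c$ accept, so the output is not individually stable.

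The step you yourself flag as the key one --- choosing rejectors in (b) and (c) whose rejection persists under any mutually acceptable extension --- cannot be carried out: knowing $X\succ_u X\cup\{c_j\}$ and $X'\succeq_u X$ tells you nothing about $X'$ versus $X'\cup\{c_j\}$. The paper does not attempt persistence. In Algorithm~\ref{alg:is:general}, after $v$ joins a single most-preferred acceptable child coalition (never a union of several), a \textbf{while} loop repeatedly scans \emph{all} successors currently adjacent to $v$'s coalition --- including the deep ones your conditions miss --- and absorbs, one player at a time, any who strictly wants to join and is unanimously accepted, until a fixed point is reached. Stability of the final coalition against adjacent outsiders is then certified directly by the loop's termination condition rather than inherited from lower levels, and stability of absorbed players is argued from the fact that their utility never decreases after joining, combined with Lemma~\ref{lem:guarantee}. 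You would need to add such a fixed-point absorption phase, and rebuild the induction around it, for your proof to go through.
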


\begin{proof}
We first give an informal description of our algorithm, followed by pseudocode.
If the input graph $(N, L)$ is a forest, we can process each of its connected components
separately, so we can assume that $(N, L)$ is a tree.
We choose an arbitrary node $r$ to be the root; this transforms $(N, L)$
into a rooted tree $(N,A^{r})$ with root $r$ and determines a hierarchy of players. 
For each player $i$, from the bottom player to the top of the hierarchy, 
we compute a tentative partitioning of the subtree rooted at $i$. 
To this end, among all coalitions that $i$'s children belong to, we identify
those whose members would be willing to let $i$ join them. 
Then we let $i$ choose between his most preferred option 
among all such coalitions and the singleton $\{i\}$.
We then check if any of the successors of $i$ who are adjacent to $i$'s coalition want to join it;
we let them do so if they are approved by the current coalition members.

For a family of subsets $\calP\subseteq \calN(i)$, we denote by $\displaystyle \max_{i}\calP$ the family of the most preferred subsets by $i \in N$ in $\calF$, i.e., 
\[
\max_{i}\calP =\{\, X \in \calP \mid X \succeq_{i} Y~\mbox{for all}~Y \in \calP\,\}.
\]

Given a pair of nonempty subsets $X,Y\subseteq N$, we write $X \msucceq Y$ 
if $X\cap Y \neq \emptyset$ and $X \succeq_{i} Y$ for all $i \in X \cap Y$.

\begin{algorithm}                      
\caption{Finding IS partitions}         
\label{alg:is:general}                          
\begin{algorithmic}[1]                  
\REQUIRE tree $(N,L)$, $r\in N$, oracles for $\succeq_i$, $i\in N$.
\ENSURE $\pi^{(r)}$.
\STATE make a rooted tree $(N,A^{r})$ with root $r$ by orienting all the edges in $L$.
\STATE initialize $\best(i)\leftarrow \emptyset$ and $\pi^{(i)}\leftarrow\emptyset$ for each $i \in N$.
\FOR{$t=0,\ldots,\height(r,A^{r})$}
\FOR{$i \in N$ with $\height(i,A^{r})=t$} 
\STATE $C(i)=\{\, k \in \ch(\{i\},A^{r})\mid \best(k)\cup\{i\} \msucceq \best(k) \,\}$.
\STATE\label{step:chooseb} choose
$\best(i) \in \displaystyle{\max_{i}} (\{\{i\}\} \cup \{\, \best(k)\cup\{i\} \mid k\in C(i)\,\}) $.
\WHILE{there exists $j \in \ch(\best(i),A^{r})$ such that
$\best(i) \cup \{j\} \succ_{j} \best(j)~\mbox{and}~$\\
$\best(i) \cup \{j\} \msucceq \best(i)$}\label{step:criterion}
\STATE $\best(i) \leftarrow \best(i) \cup \{j\}$
\ENDWHILE
\STATE $\pi^{(i)} \leftarrow \{\best(i)\}\cup\{\,\pi^{(k)} \mid k \in \ch(\best(i),A^{r}) \,\}$
\ENDFOR
\ENDFOR
\end{algorithmic}
\end{algorithm}
We will now argue that Algorithm~\ref{alg:is:general} correctly identifies an individually stable partition.
Our argument is based on two lemmas. 

\begin{lemma}\label{lem:guarantee}
For every $i \in N$ and every $k \in \ch(\{i\},A^{r})$, 
if $\best(k)\cup \{i\} \msucceq \best(k)$, then $\best(i) \succeq_{i} \best(k)\cup \{i\}$.
\end{lemma}
Lemma~\ref{lem:guarantee} follows immediately
from the choice of $\best(i)$ in Line~\ref{step:chooseb} and the stopping criterion
of the {\bf while} loop in lines 7--9.

\begin{lemma}\label{lem:IS}
For each $i \in N$, $j \in \suc(i,A^{r})$ and all $X \in \pi^{(i)}\cup \{\emptyset\}$
there is no IS feasible deviation of $j$ from $\pi^{(i)}(j)$ to $X$. 
\end{lemma}

\begin{proof}
We use induction on $\height(i,A^{r})$. For 
$\height(i,A^{r})=0$ our assertion is trivial. 
Suppose that it holds for all $j \in N$ with $\height(j,A^{r}) \leq t-1$,
and consider a player $i$ with $\height(i, A^r)=t$. Assume towards a contradiction 
that there exists an IS feasible deviation of $j \in \suc(i,A^{r})$ from 
$\pi^{(i)}(j)$ to $X \in \pi^{(i)}\cup\{\emptyset\}$, i.e., $X \cup \{j\}$ is connected,
\begin{align}
&\label{eq:b} X\cup\{j\} \succ_{j} \pi^{(i)}(j),~\mbox{and}\\ 
&\label{eq:c} X\cup\{j\} \msucceq X. 
\end{align}
By construction, $\pi^{(i)}$ is individually rational, so $X \neq \emptyset$. If $j \not \in \best(i)$ and 
$X \neq \best(i)$, this would contradict the induction hypothesis. Moreover, by the stopping criteria  
in Line~\ref{step:criterion}, it cannot be the case that $j \not \in \best(i)$ and $X = \best(i)$. Thus, $j \in \best(i)$ and 
$X =\best(k)$ for some $k \in \ch(\{j\},A^{r})$. If $j=i$, Lemma~\ref{lem:guarantee} and~\eqref{eq:c} imply that 
$\pi^{(i)}(i)=\best(i) \succeq_{i} \best(k)\cup \{i\}$, contradicting \eqref{eq:b}. Thus, $j \neq i$.

Suppose that player $j$ joins the coalition $\best(i)$ when $\best(i)$ is initialized in Line~\ref{step:chooseb}. 
Then, $j\in \best(k^*)$ for some $k^{*} \in \ch(\{i\},A^{r})$. It follows that
$\best(k) \in \pi^{(k^{*})}$, since $k \not \in \best(k^{*})$ and $k \in \ch(\{j\},A^{r})$. 
Further, the second stopping criterion of the {\bf while} loop in Line~\ref{step:criterion} ensures 
that $j$'s utility does not decrease during the execution of Algorithm~\ref{alg:is:general}. 
Thus, $\best(i) \succeq_{j} \best(k^*)$. Combining this with~\eqref{eq:b} and~\eqref{eq:c} yields
\[
\best(k)\cup\{j\} \succ_{j} \pi^{(i)}(j)=\best(i) \succeq_{j} \best(k^*)=\pi^{(k^*)}(j),
\]
and $\best(k)\cup\{j\} \msucceq \best(k)$. It follows that $\pi^{(k^{*})}$ 
admits an IS feasible deviation of $j$ from $\pi^{(k^{*})}(j)$ to $\best(k) \in \pi^{(k^{*})}$. 
This contradicts the induction hypothesis. 

On the other hand, suppose that player $j$ joins $\best(i)$ during the {\bf while} loop
in Lines 7--9. Then at that point player $j$ is made better off by 
leaving $\best(j)$ and joining $\best(i)$. From then on, she vetoes all candidates whose presence would make
her worse off. Hence, $\best(i) \succ_j \best(j)$. However, \eqref{eq:c} and 
Lemma~\ref{lem:guarantee} imply that $\best(j) \succeq_j \best(k)\cup\{j\}$. 
Thus, $\best(i) \succ_j \best(k)\cup \{j\}$, contradicting~\eqref{eq:b}.
\end{proof}
The partition $\pi^{(r)}$ is feasible by construction, so applying Lemma~\ref{lem:IS} 
with $i=r$ implies that
$\pi^{(r)}$ is an individually stable feasible partition of $N$.

It remains to analyze the running time of Algorithm~\ref{alg:is:general}.
Consider the execution of the algorithm for a fixed player $i$.
Let $c=|\ch(\{i\},A^{r})|$, $s=|\suc(i,A^{r})|$. Line~5 
requires at most $s$ oracle queries: no successor of $i$ is queried more than once.
Line~6 requires $c$ oracle queries.
Moreover, at each iteration of the {\bf while}
loop in lines 7--9 at least one player joins $\best(i)$, so there are at most 
$s$ iterations, in each iteration we consider at most $s$ candidates,
and for each candidate we perform at most $s$ queries.
Summing over all players, we conclude that the number or oracle
queries is bounded by $O(|N|^4)$.
This completes the proof of the theorem.
\end{proof}


\noindent Theorem~\ref{thm:IS} provides a constructive proof that every hedonic graph game whose
underlying graph $(N, L)$ is a forest admits an individually stable feasible partition.
In contrast, if $(N, L)$ contains a cycle, the players' preferences
can always be chosen so that no individually stable feasible partition exists.

\begin{proposition}\label{prop:tight}
Suppose that the graph $(N, L)$
contains a cycle $C=\{i_{1},i_{2},\ldots,i_{k}\}$ with $k\geq 3$, 
$\{i_{h},i_{h+1}\} \in L$ for $h=1,2,\ldots,k$, where $i_{k+1}:=i_1$. 
Then, we can choose preference relations $(\succeq_{i})_{i \in N}$ 
so that the set of IS feasible partitions of the game 
$(N,(\succeq_{i})_{i \in N},L)$ is empty.
\end{proposition}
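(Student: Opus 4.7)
The plan is to construct preferences $(\succeq_{i})_{i \in N}$ such that every feasible individually rational partition admits a feasible IS deviation, so that no IS partition exists. First, I make every non-cycle player inert: for each $j \in N \setminus C$, set $\{j\}$ to be strictly preferred to every other coalition in $\calN(j)$. Then in any feasible IR partition all such $j$ are singletons, and it suffices to design preferences on the cycle $C$.

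On $C$, the core construction assigns each $i_h$ the cyclic profile
\[
\{i_h, i_{h+1}\} \succ_{i_h} \{i_{h-1}, i_h\} \succ_{i_h} \{i_h\} \succ_{i_h} X
\]
for every remaining $X \in \calN(i_h)$ (indices modulo $k$). The only IR coalitions for $i_h$ are then the two adjacent pairs and the singleton, so every feasible IR partition of $C$ is a matching using cycle edges plus unmatched singletons. For $k = 3$ and for odd $k \geq 5$, every such partition admits an IS deviation: an unmatched player $i_h$ either finds an adjacent unmatched neighbor, with whom it joins to form a pair accepted as second-best, or some matched player in her backward (second-best) pair is adjacent to an unmatched player and can deviate to form her best forward pair, again accepted as second-best. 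The argument terminates because any odd cycle admits no perfect matching, and so always contains an unmatched cycle vertex.

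The main obstacle is even $k \geq 4$: under the simple profile above, a perfect matching on the even cycle may be individually stable, since any single-player switch to an adjacent pair would form a three-arc that is not IR under the basic preferences, and hence is not wanted by the deviator. To overcome this I refine the preferences by inserting a chosen three-arc $\{i_{h-1}, i_h, i_{h+1}\}$ into the IR coalitions of its members, with the three-arc strictly preferred by $i_h$ to her current pair and weakly preferred by $i_{h-1}$ and $i_{h+1}$ to their current pairs. Then in any perfect matching on the even cycle some player can feasibly merge her pair with an adjacent pair into this three-arc, giving an IS deviation. A final round of case analysis over all feasible IR partitions (including any new ones involving three-arcs) under the refined profile then confirms that no IS partition survives, completing the proof.
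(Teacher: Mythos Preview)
Your argument for odd $k$ (including $k=3$) is correct: with the cyclic pair preferences, any feasible individually rational partition of $C$ is a matching plus singletons, there is necessarily an unmatched vertex $i_h$, and then $i_{h-1}$ (whether unmatched or in $\{i_{h-2},i_{h-1}\}$) has an IS deviation to $\{i_h\}$.

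The gap is the even case. Your fix---inserting a single three-arc $T=\{i_{h-1},i_h,i_{h+1}\}$ near the top of its members' rankings---does break both perfect matchings, but it simultaneously creates a new IS partition, namely one that contains $T$ itself. Concretely, for $k=4$ with $T=\{i_4,i_1,i_2\}$, the partition $\{T,\{i_3\}\}$ is individually stable: every member of $T$ is at (or tied with) her top choice, $i_3$ cannot join $T$ because $T\cup\{i_3\}$ is the whole cycle and is not IR, and no other coalition is available. For $k=6$ the partition $\{T,\{i_3\},\{i_4,i_5\}\}$ is IS by the same reasoning: no one in $T$ wants to leave, $i_3$ and $i_5$ cannot profitably join $T$, and the remainder $\{i_3,i_4,i_5\}$ is a path on which $\{\{i_3\},\{i_4,i_5\}\}$ admits no IS deviation. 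More generally, once $T$ is a coalition, the remaining $k-3$ cycle vertices form a path, and paths always admit IS partitions (this is exactly Theorem~\ref{thm:IS}); combining such a partition with $T$ yields an IS partition of the whole game. The ``final round of case analysis'' you defer would therefore fail, not succeed.

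The paper's proof sidesteps this difficulty with a single uniform construction. It takes $d$ to be the smallest positive integer that does \emph{not} divide $k$ (so $d<k$ whenever $k\ge 3$), declares the IR coalitions for $i_h$ to be all arcs of $C$ through $i_h$ of size at most $d$, and makes $i_h$ care only about whether her successor $i_{h+1}$ is present (indifferent otherwise). In any IR partition the cycle is then cut into arcs of size at most $d$; since $d\nmid k$, some arc $Y$ has size at most $d-1$, and the predecessor of $Y$'s first vertex can join $Y$---strictly improving (she gains her successor) while every member of $Y$ is indifferent. This handles all $k\ge 3$ at once, with no parity split and no risk of the ``patch'' creating a new stable outcome.
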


\begin{proof}
Let $d$ be the smallest natural number that does not divide $k$. 
For each $i_h \in C$, define
\[
\calC(i_h)=\{\, X \in \calF_{L}(i_h) \mid X\subseteq C, |X|\leq d \,\}.
\]
Notice that $d \geq 2$, and hence $\{i_h\},\{i_h,i_{h+1}\} \in \calC(i_h)$. Define a 
graph game $(N,(\succeq_{i})_{i \in N},L)$ so that for each $i_h \in C$ we have
\begin{enumerate}
\item[$(${\rm i}$)$] $X \succ_{i_h} Y~\mbox{for all}~X,Y \in \calC(i_h)~\mbox{such that}~i_{h+1} \in X \setminus Y$,
\item[$(${\rm ii}$)$] $X \sim_{i_h} Y~\mbox{for all}~X,Y \in \calC(i_h)~\mbox{such that}~i_{h+1} \in X \cap Y$,
\item[$(${\rm iii}$)$] $X \sim_{i_h} Y~\mbox{for all}~X,Y \in \calC(i_h)~\mbox{such that}~i_{h+1} \not \in X\cup Y$, 
\item[$(${\rm iv}$)$] $X \succ_{i_h} Y~\mbox{for all}~X \in \calC(i_h), Y \not \in \calC(i_h)$.
\end{enumerate}

Let $\pi$ be an arbitrary individually rational feasible partition of $N$. 
By individual rationality, $\pi(i) \in \calC(i)$ for every $i \in C$.  
Since $d$ does not divide $k$, there exists $Y \in \pi$ such that $|Y| \leq d-1$ 
and $Y \subseteq C$. Let $Y=\{i_{h+1}, i_{h+2},\ldots,i_{m}\}$. Then we have
$\{i_h\}\cup Y \succ_{i_h} \pi(i_h)$, and $\{i_h\}\cup Y \sim_{i_{j}} Y$ for all $i_{j} \in Y$.
Thus, $\pi$ is not individually stable.  
\end{proof}

We summarize our results for individually stable feasible partitions in the following corollary.

\begin{cor} \label{cor:is_tree}
For the class of hedonic graph games, the following statements are equivalent.
\begin{itemize}
\item[$(${\rm i}$)$]  $(N,L)$ is a forest.
\item[$(${\rm ii}$)$] For every hedonic graph game $(N,(\succeq_{i})_{i \in N},L)$ there exists an individually stable 
feasible partition of $N$.
\end{itemize}
\end{cor}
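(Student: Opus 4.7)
The plan is to establish the corollary by pulling together the two previous results of the section and verifying that only a small gap needs to be filled. The direction (i) $\Rightarrow$ (ii) is essentially a restatement of Theorem~\ref{thm:IS}: if $(N,L)$ is a forest, process each connected component separately and apply Algorithm~\ref{alg:is:general} to each; the union of the resulting partitions is an IS feasible partition of the whole game, for any preference profile.

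For (ii) $\Rightarrow$ (i) I would argue by contrapositive. Suppose $(N,L)$ is not a forest; then by definition it contains a cycle $C=\{i_1,\dots,i_k\}$ with $k\geq 3$ and $\{i_h,i_{h+1}\}\in L$ for all $h$. Proposition~\ref{prop:tight} provides preferences $(\succeq_i)_{i\in N}$ such that no IS feasible partition of $(N,(\succeq_i)_{i\in N},L)$ exists, which is exactly the negation of (ii). Thus (ii) forces $(N,L)$ to be a forest.

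The one place where I would slow down is the link between the conclusion of Proposition~\ref{prop:tight} and the hypothesis (ii). The proof of Proposition~\ref{prop:tight} starts by restricting attention to \emph{individually rational} feasible partitions, whereas (ii) asks about \emph{individually stable} ones. The required implication is that every IS feasible partition is individually rational in a graph game: if $\pi(i)\prec_i\{i\}$ for some player $i$, then $i$ has a feasible IS-deviation to the empty coalition (becoming a singleton), since $\{i\}$ is trivially connected and no consent is needed. Hence an IS feasible partition is individually rational, so Proposition~\ref{prop:tight} already rules out its existence under the constructed preferences.

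I do not expect any genuine obstacle here; the corollary is a packaging of Theorem~\ref{thm:IS} and Proposition~\ref{prop:tight}, with the only subtle point being the above observation that IS implies individual rationality in the graph-game model.
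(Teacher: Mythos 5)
Your proposal is correct and follows exactly the route the paper intends: the corollary is stated as a summary of Theorem~\ref{thm:IS} (for (i)\,$\Rightarrow$\,(ii)) and Proposition~\ref{prop:tight} (for the contrapositive of (ii)\,$\Rightarrow$\,(i)), with no separate proof given. Your observation that an individually stable feasible partition must be individually rational---since a player with $\{i\}\succ_i\pi(i)$ always has a feasible, consent-free IS-deviation to $\emptyset$---is precisely the small bridge needed to invoke Proposition~\ref{prop:tight}, and it is the same fact the paper uses implicitly when that proposition restricts attention to individually rational partitions.
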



\section{Core Stability}\label{sec:core}
\noindent
As mentioned in Section~\ref{sec:intro}, classic results by Le Breton et al.~\cite{LeBreton1992}
and Demange~\cite{Demange1994,Demange2004} for non-transferable utility games
imply an analogue of Corollary~\ref{cor:is_tree} for core stable partitions.

\begin{theorem}[\cite{LeBreton1992,Demange1994,Demange2004}]
\label{thm:core_tree}
For the class of hedonic games with graph structure, the following statements are equivalent.
\begin{itemize}
\item[$(${\rm i}$)$] $(N,L)$ is a forest.
\item[$(${\rm ii}$)$] For every hedonic graph game $(N,(\succeq_{i})_{i \in N},L)$ 
there exists a core stable feasible partition of $N$.
\end{itemize}
\end{theorem}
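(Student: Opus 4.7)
Since this theorem is cited from the classical works \cite{LeBreton1992,Demange1994,Demange2004}, the plan is to outline how the proof would go in the style of those references, appealing to them for the full argument. The equivalence has two directions, handled separately, with direction (i) $\Rightarrow$ (ii) being the substantive one.

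For direction (i) $\Rightarrow$ (ii), I would give a recursive construction in the spirit of Demange's algorithm, closely parallel to Algorithm~\ref{alg:is:general}. Since $(N,L)$ is a forest, each tree can be handled independently, so assume $(N,L)$ is a tree, and root it at an arbitrary node $r$, yielding $(N, A^r)$. In order of increasing height, compute for each node $i$ a coalition $\best(i) \in \calF_L(i)$ together with a feasible partition $\pi^{(i)}$ of $\suc(i, A^r)$. At node $i$, the candidates are all connected sets $X \subseteq \suc(i, A^r)$ containing $i$, each inducing the partition $\{X\} \cup \bigcup_{k \in \ch(X, A^r)} \pi^{(k)}$; among these, let $\best(i)$ be a candidate that $i$ most prefers. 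The output is $\pi^{(r)}$. To prove core stability I argue by contradiction: given a connected coalition $S$ strongly blocking $\pi^{(r)}$, let $v$ be the topmost node of $S$ in $A^r$. By induction on height, $S$ should appear as a candidate when $\best(v)$ is selected; since $v$ maximized her preference over the candidates, $\best(v) \succeq_v S$, contradicting $S \succ_v \pi^{(r)}(v) = \best(v)$.

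For direction (ii) $\Rightarrow$ (i), I would construct, given a cycle $C = (i_1, \ldots, i_k)$ in $(N,L)$, a hedonic game with no core stable feasible partition. The construction would refine Proposition~\ref{prop:tight} by replacing its indifferences with strict preferences, so that any blocking becomes strong. Letting $d \geq 2$ be the smallest integer not dividing $k$, I would design strict preferences such that each cycle vertex $i_h$'s favorite coalitions are the cycle arcs $\{i_h, i_{h+1}, \ldots, i_{h+j-1}\}$ for $j = 1, \ldots, d$, ordered by length, with each such arc strictly preferred to any coalition that uses non-cycle vertices or exceeds length $d$. In every individually rational feasible partition, the cycle vertices tile the cycle with such arcs, and since $d \nmid k$, at least one tile has length strictly less than $d$; extending it by the adjacent cycle vertex yields a coalition that every participant strictly prefers to her current tile, hence a strongly blocking coalition.

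The main obstacle lies in the inductive correctness argument for the forest direction, specifically in the identification of $\pi^{(r)}(v)$ with $\best(v)$. The topmost node $v$ of $S$ might actually lie inside some coalition $\best(w)$ for a strict ancestor $w$, so that $\pi^{(r)}(v) = \best(w)$ rather than $\best(v)$, and the contradiction then has to be drawn at $w$ rather than at $v$. Resolving this requires a more delicate induction that traces the blocking back to the true ``owner'' of $v$'s coalition and exploits the fact that any alternative candidate at $w$ which excludes the branch containing $v$ would in turn free $v$ to attain a coalition at least as good as $S$. This is precisely the technique handled by \cite{Demange2004}, which also addresses the parallel subtlety in direction (ii) $\Rightarrow$ (i), namely ensuring strict rather than weak blocking when preferences are tuned carefully.
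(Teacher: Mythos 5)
The paper offers no proof of Theorem~\ref{thm:core_tree}---it is imported wholesale from \cite{LeBreton1992,Demange1994,Demange2004}---so the only things to measure your sketch against are those constructions and the paper's own Algorithm~\ref{alg:is:general}. Against that benchmark, your (i)~$\Rightarrow$~(ii) argument has a genuine gap, and it is not the bookkeeping issue you flag at the end: the \emph{construction} is wrong, not merely its correctness proof. Taking $\best(i)$ to be $i$'s unconditionally most preferred connected subset of $\suc(i,A^{r})$ fails already on the path $r-a-b$ rooted at $r$ with $\{r,a\}\succ_r\{r\}\succ_r\{r,a,b\}$, $\{a,b\}\succ_a\{a\}\succ_a\{r,a\}\succ_a\{r,a,b\}$, and $\{a,b\}\succ_b\{b\}\succ_b\{r,a,b\}$: your algorithm outputs $\{\{r,a\},\{b\}\}$, which is strongly blocked by $\{a,b\}$, while the unique core partition is $\{\{r\},\{a,b\}\}$---one in which the root does \emph{not} get her favourite coalition from her own subtree. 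No ``more delicate induction tracing the blocking back to the true owner'' can repair this, because the output is simply not in the core. The missing idea is Demange's recursive \emph{guarantee level}: $\best(i)$ must be $i$'s most preferred coalition among those connected $X\subseteq\suc(i,A^{r})$ with $i\in X$ such that every $j\in X\setminus\{i\}$ weakly prefers $X$ to her own, previously computed, $\best(j)$---precisely the acceptance condition $\msucceq$ that Algorithm~\ref{alg:is:general} builds into its candidate set and that your sketch drops. With that constraint every player ends up weakly above her guarantee level in the final partition, and your topmost-node argument then goes through verbatim: a strongly blocking $S$ with topmost node $v$ is itself an admissible candidate for $\best(v)$, since its non-$v$ members strictly prefer $S$ to their guarantee levels.

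The (ii)~$\Rightarrow$~(i) direction is also underspecified at the decisive point. The deviation in Proposition~\ref{prop:tight} is intrinsically only a \emph{weak} block (the members of the short tile are indifferent to admitting $i_h$), and replacing the indifferences by a length ordering does not automatically yield a strong block: the vertex $i_h$ being recruited may sit in a tile longer than the extended one, and the players inside the short tile are compared between two arcs neither of which starts at them---exactly the comparisons your specification leaves undetermined (under the natural cardinality-only reading, the $4$-cycle with $d=3$ and partition $\{\{i_1,i_2,i_3\},\{i_4\}\}$ admits no strong block at all). Making every tiling of the cycle by arcs of length at most $d$ strongly blocked requires position-dependent tie-breaking within each length class and a careful choice of which arc blocks; that is the nontrivial combinatorial content of \cite{LeBreton1992,Demange1994}, and while deferring it to the citations is legitimate, the concrete preferences you propose do not establish it as stated.
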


We will now show that these two results can be combined, in the following sense:
if $(N, L)$ is acyclic, then every hedonic game on $(N, L)$ admits a feasible partition
that belongs to the core and is individually stable; moreover, the converse is also true.

\begin{theorem}\label{thm:core_is_tree}
For the class of hedonic games with graph structure, the following statements are equivalent.
\begin{itemize}
\item[$(${\rm i}$)$] $(N,L)$ is a forest.
\item[$(${\rm ii}$)$] For every hedonic graph game $(N,(\succeq_{i})_{i \in N},L)$, 
there exists a feasible partition of $N$ that belongs to the core and is individually stable.
\end{itemize}
\end{theorem}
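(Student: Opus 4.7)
The backward direction (ii)~$\Rightarrow$~(i) is immediate from Proposition~\ref{prop:tight}: if $(N,L)$ contains a cycle, the preference profile constructed there admits no individually stable feasible partition whatsoever, so a fortiori no feasible partition can be simultaneously core stable and individually stable.

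For (i)~$\Rightarrow$~(ii), assume $(N,L)$ is a forest; since distinct connected components can be treated independently, assume $(N,L)$ is a tree, root it at an arbitrary $r$ to form $(N,A^{r})$, and process nodes in order of non-decreasing height. The plan is to merge the Demange-style top-coalition construction underlying Theorem~\ref{thm:core_tree} with the IS-extension mechanism of Algorithm~\ref{alg:is:general}. At each node $i$, after $\best(k)$ and $\pi^{(k)}$ have been computed for every $k \in \ch(\{i\},A^{r})$, first choose $\best(i)$ as a $\succeq_{i}$-maximal element of the Demange-eligible family
\[
\mathcal{T}(i) \;=\; \{\, X \in \calF_{L}(i) : X \subseteq \suc(i,A^{r})\text{ and } X \msucceq \best(k)\text{ for every } k \in \ch(\{i\},A^{r}) \cap X \,\},
\]
and then iteratively admit any descendant reachable via $\best(i)$ who strictly prefers to join and whose admission is unanimously accepted by the current members of $\best(i)$, exactly as in lines~7--9 of Algorithm~\ref{alg:is:general}. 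Finally set $\pi^{(i)} = \{\best(i)\} \cup \{\, \pi^{(k)} : k \in \ch(\best(i),A^{r}) \,\}$ and output $\pi^{(r)}$.

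Correctness rests on two inductive claims over the height $t$ of the node under consideration. The first, a Demange-type argument, says that no feasible $S \subseteq \suc(i,A^{r})$ strongly blocks $\pi^{(i)}$: the topmost node of $S$ (which must equal $i$ when the claim is applied at the root) would place $S$ inside $\mathcal{T}(i)$ by the inductive hypothesis, contradicting the $\succeq_{i}$-maximality of $\best(i)$ within that family. The second is an IS analogue of Lemma~\ref{lem:IS}: no $j \in \suc(i,A^{r})$ has an IS feasible deviation from $\pi^{(i)}(j)$ to any coalition in $\pi^{(i)} \cup \{\emptyset\}$. This is proved by the same case analysis as in Lemma~\ref{lem:IS}, combined with an analogue of Lemma~\ref{lem:guarantee} adapted to the restricted family $\mathcal{T}(i)$.

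The chief obstacle is establishing mutual compatibility of the two components. One must check that the IS-extension loop preserves the Demange consistency condition $\best(i) \msucceq \best(k)$ for every child $k$ of $i$ already present in $\best(i)$; this follows from transitivity of the $\msucceq$-relation under each unanimously accepted admission. Conversely, one must check that restricting the initial choice of $\best(i)$ to the Demange-eligible family $\mathcal{T}(i)$, rather than the smaller candidate family used in Algorithm~\ref{alg:is:general}, does not create new IS-deviations higher in the tree: a descendant $j$ tempted by a Demange-selected coalition at some ancestor must, by the construction of $\mathcal{T}(j)$, have already had an at-least-as-good option within her own $\mathcal{T}(j)$, ruled out by the maximality of $\best(j)$ there. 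Formalising this simultaneous invariant---tracking core-stability and IS guarantees along the same induction---is the main technical content of the proof.
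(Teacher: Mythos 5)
Your backward direction is fine: Proposition~\ref{prop:tight} kills individual stability outright on any graph containing a cycle, so (ii)$\Rightarrow$(i) follows a fortiori; the paper instead routes this direction through Theorem~\ref{thm:core_tree}, but both are valid.

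The forward direction is where the proposal has a genuine gap, and it also diverges sharply from the paper. You propose to interleave Demange's construction with Algorithm~\ref{alg:is:general} and to verify a simultaneous core-plus-IS invariant by induction, but you explicitly defer the proof of that invariant (``the main technical content''), so what you have is a plan rather than a proof. The plan also has a concrete defect: your eligible family $\mathcal{T}(i)$ constrains a candidate $X$ only via $X \msucceq \best(k)$ for children $k$ of $i$ lying in $X$, which restricts only those members of $X$ that happen to belong to some $\best(k)$. Demange's blocking argument needs every member $j\in X\setminus\{i\}$ to compare $X$ against \emph{her own} guarantee level; without that, the step ``a strongly blocking $S$ with top node $i$ lies in $\mathcal{T}(i)$, contradicting $\succeq_i$-maximality of $\best(i)$'' does not close, because maximality over a family that contains unrealizable coalitions (whose deeper members would refuse them) says nothing about what $i$ actually receives in $\pi^{(r)}$. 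The paper sidesteps all of this with a short reduction: for each $i$ take a refinement $>_i$ of $\succeq_i$ that breaks every tie $X\sim_i Y$ with $Y\subsetneq X$ in favor of the superset $X$; by Theorem~\ref{thm:core_tree} the refined game has a core stable feasible partition $\pi$; $\pi$ remains core stable for the original game since refinement only creates more strongly blocking coalitions; and any IS feasible deviation of $i$ to $X$ in the original game turns $X\cup\{i\}$ into a strongly blocking coalition of the refined game (each $j\in X$ has $X\cup\{i\}\succeq_j X$, hence $X\cup\{i\}>_j X$ by the tie-breaking rule), a contradiction. If you want to pursue your algorithmic route you must state and prove the guarantee-level invariant precisely and correct the definition of $\mathcal{T}(i)$; the refinement trick makes that work unnecessary.
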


\begin{proof}
We will argue that if an undirected graph $(N, L)$ 
has the property that every game on $(N, L)$
admits a core stable feasible partition,
then every game on $(N, L)$ admits a feasible partition
that belongs to the core and is individually stable. Combined with Theorem~\ref{thm:core_tree},
this gives the desired result.

Let $(N, L)$ be a graph whose associated game always admits a core stable partition.  
Recall that a preference relation $\succeq_1$ on a set $X$ is a {\em refinement} of a preference
relation $\succeq_2$ on $X$ if for every $a, b\in X$ it holds that $a \succ_2 b$ implies $a\succ_1 b$.

For each $i \in N$, let $>_i$ be a refinement of $\succeq_i$ 
that satisfies $X >_i Y$ whenever $X \sim_i Y$ and $Y\subsetneq X$. 
By supposition, the hedonic graph game $(N,(>_{i})_{i \in N},L)$ 
admits a core stable feasible partition $\pi$.
By construction, $\pi$ is core stable in the original game $(N,(\succeq_{i})_{i \in N},L)$
as well. We will now argue that it is also individually stable. Assume towards a
contradiction that there exists an IS feasible deviation of some player $i \in N$ from $\pi(i)$ to 
$X \in \pi \cup \{\emptyset\}$. That is, 
$X\cup \{i\} \in \calF_{L}$, $X\cup \{i\} \succ_{i} \pi(i)$, and $X\cup\{i\}\succeq_{j} X$
for each $j \in X$. By construction of $(>_{i})_{i \in N}$, this implies that $X \cup \{i\}>_{i}\pi(i)$ and 
$X\cup \{i\}>_j X$ for each $j \in X$. This 
contradicts the fact that $\pi$ is a core stable partition of the game
$(N,(>_{i})_{i \in N},L)$.
\end{proof}

\subsection{Computational complexity of CR}
\noindent
Demange's proof that every hedonic graph game on an acyclic graph admits a core stable outcome 
is constructive: her paper~\cite{Demange2004} provides an algorithm to find 
a core stable partition. This algorithm  
is similar in flavor to Algorithm~\ref{alg:is:general}: it processes the players
starting from the leaves and moving towards the root, calculates 
the ``guarantee level'' of each player, and then 
partitions the players into disjoint groups in such a way that the final outcome satisfies 
their ``guarantee levels''; this is shown to ensure core stability. 
While Demange does not analyze the running time of her algorithm,
it can be verified that it runs in time polynomial in the number of connected
subsets of the underlying graph. Thus, in particular, Demange's algorithm 
runs in polynomial time if this graph is a path.

\begin{theorem}[implicit in~\cite{Demange2004}]
Suppose that we are given oracle access to the preference relations $\succeq_i$ of all players
in a hedonic graph game $\calG=(N,(\succeq_{i})_{i \in N},L)$, where $(N,L)$ is a forest.
Then we can find a core stable feasible outcome of $\calG$ in time polynomial in
the number of connected subsets of $(N, L)$.
\end{theorem}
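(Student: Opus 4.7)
My plan is to trace through Demange's algorithm and bound the number of oracle queries it makes in terms of $|\calF_{L}|$, the number of connected subsets of $(N,L)$.

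First, I would reduce to the case in which $(N,L)$ is a tree, by processing its connected components separately and then taking the union of the resulting core stable feasible partitions; since deviations must be feasible, they cannot cross components, so the concatenation is core stable. I would then root the tree at an arbitrary node $r$, obtaining $(N,A^{r})$.

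Next, I would describe Demange's bottom-up pass in terms of the rooted tree. For each $i\in N$, in increasing order of $\height(i,A^{r})$, the algorithm computes a \emph{guarantee} $g(i)\in \calF_{L}(i)$ with $g(i)\subseteq \suc(i,A^{r})$: informally, $g(i)$ is a $\succeq_{i}$-maximal element of
\[
\{\,S\in \calF_{L}(i)\mid S\subseteq \suc(i,A^{r}),\ S\succeq_{j} g(j)\text{ for every }j\in S\setminus\{i\}\,\},
\]
selected according to some deterministic tie-breaking rule. After this pass, a top-down sweep builds the actual partition: the block containing $r$ is $g(r)$; for every child $c$ of a node in $g(r)$ with $c\notin g(r)$, one recursively places $c$ and its descendants according to $g(c)$. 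Core stability follows from Demange's original argument, so my only task is to bound the running time.

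The complexity argument is a charging lemma. For each $i$, let $\calF_{L}^{(i)}=\{S\in \calF_{L}(i)\mid S\subseteq \suc(i,A^{r})\}$. Computing $g(i)$ requires enumerating $S\in \calF_{L}^{(i)}$ and, for each such $S$, performing at most $|S|\le |N|$ oracle queries to check the acceptability condition $S\succeq_{j} g(j)$ for $j\in S\setminus\{i\}$ and to compare $S$ with the current best candidate for $i$. The key observation is that each $S\in \calF_{L}$ lies in $\calF_{L}^{(i)}$ for exactly one $i$, namely the unique member of $S$ closest to $r$ in $A^{r}$ (which exists because $S$ is connected and the tree is rooted). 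Hence $\sum_{i\in N}|\calF_{L}^{(i)}|\le |\calF_{L}|$, and the bottom-up pass costs $O(|N|\cdot |\calF_{L}|)$ oracle queries; the top-down reconstruction is $O(|N|)$.

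The main obstacle I expect is matching my formalization to the version of the algorithm in~\cite{Demange2004} and, in particular, making the tie-breaking among $\succeq_{i}$-maximal candidates explicit enough to be implementable within the stated query budget while still producing a core stable partition. Once this is pinned down, the charging argument delivers a running time polynomial in $|N|+|\calF_{L}|$, which is polynomial in $|N|$ whenever $|\calF_{L}|$ is, and in particular when $(N,L)$ is a path.
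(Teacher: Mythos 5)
Your proposal follows essentially the same route as the paper, which likewise invokes Demange's leaf-to-root computation of guarantee levels followed by a top-down partition construction, defers to~\cite{Demange2004} for core stability, and merely asserts that the running time ``can be verified'' to be polynomial in $|\calF_L|$. Your charging lemma---each connected set $S$ is examined only at the unique member of $S$ closest to the root, so $\sum_{i}|\calF_L^{(i)}|\le|\calF_L|$ and the total cost is $O(|N|\cdot|\calF_L|)$ oracle queries---is correct and in fact supplies the detail the paper leaves implicit.
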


Combining Demange's algorithm with the construction in the proof of Theorem~\ref{thm:core_is_tree},
we obtain an algorithm that has the same worst-case running time as Demange's algorithm
and outputs a feasible partition that belongs to the core and is individually stable.

\begin{cor}
Suppose that we are given oracle access to the preference relations $\succeq_i$ of all players
in a hedonic graph game $\calG=(N,(\succeq_{i})_{i \in N},L)$, where $(N,L)$ is a forest.
Then we can find a feasible outcome of $\calG$ that belongs to the core and is individually stable
in time polynomial in the number of connected subsets of $(N, L)$.
\end{cor}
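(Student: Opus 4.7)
The plan is to combine Demange's algorithm, which already runs in time polynomial in $|\calF_L|$, with the refinement trick used in the proof of Theorem~\ref{thm:core_is_tree}. The output of Demange's algorithm on a suitably refined preference profile will already be core stable and individually stable in the original game, so the only real work is to supply Demange's algorithm with an oracle for the refined preferences, each of whose queries can be answered efficiently from the oracle for $\succeq_i$.

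First I would define, for each player $i$, a concrete strict refinement $>_i$ of $\succeq_i$ satisfying the condition required in the proof of Theorem~\ref{thm:core_is_tree}: if $X \sim_i Y$ and $Y \subsetneq X$ then $X >_i Y$. A convenient choice is to fix an arbitrary total order on $N$, extend it lexicographically to a total order $\triangleleft$ on subsets of $N$, and set $X >_i Y$ iff either $X \succ_i Y$, or $X \sim_i Y$ and the pair $(|X|,X)$ is lexicographically larger than $(|Y|,Y)$ under $\triangleleft$. This is a strict total order refining $\succeq_i$, and whenever $Y \subsetneq X$ we have $|Y|<|X|$, so the required monotonicity holds. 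Crucially, a single comparison $X >_i Y$ can be answered by one query to the original oracle for $\succeq_i$ plus $O(|N|)$ additional work to compare $(|X|,X)$ and $(|Y|,Y)$, so simulating an oracle for $>_i$ incurs only polynomial overhead per query.

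Next I would run Demange's algorithm on the hedonic graph game $(N,(>_i)_{i\in N},L)$, using the simulated oracle. By the theorem stated just above, this produces a feasible partition $\pi$ that is core stable in $(N,(>_i)_{i\in N},L)$, using a number of oracle calls polynomial in $|\calF_L|$. Since each simulated call costs only one original call plus $O(|N|)$ overhead, the overall running time remains polynomial in $|\calF_L|$.

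Finally, correctness is immediate from the argument already laid out in the proof of Theorem~\ref{thm:core_is_tree}: because $>_i$ refines $\succeq_i$, any coalition strongly blocking $\pi$ in the original game would also strongly block $\pi$ under $(>_i)_{i\in N}$, so $\pi$ is in the core of $\calG$; and any IS feasible deviation of some $i$ from $\pi(i)$ to $X \in \pi \cup \{\emptyset\}$ in $\calG$ would give $X \cup \{i\} >_i \pi(i)$ together with $X \cup \{i\} >_j X$ for every $j \in X$ (the latter because $X \cup \{i\} \supsetneq X$ and $X \cup \{i\} \succeq_j X$), contradicting core stability of $\pi$ in the refined game. The only point requiring any care is confirming that the tie-breaking for indifferent but inclusion-incomparable pairs does not disturb this argument, but the argument only ever invokes the strict-inclusion clause, so any refinement of the stated form works. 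The main obstacle, therefore, is not conceptual but presentational: one has to be explicit that the refined oracle is efficiently computable from the given one, which the $(|X|,X)$ tie-breaking rule handles cleanly.
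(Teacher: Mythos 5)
Your proposal is correct and follows essentially the same route as the paper, which simply combines Demange's algorithm with the refinement construction from the proof of Theorem~\ref{thm:core_is_tree}. The only addition is your explicit cardinality-then-lexicographic tie-breaking rule, which makes the simulated oracle for $>_i$ concretely computable from the oracle for $\succeq_i$ --- a detail the paper leaves implicit but which is handled correctly here.
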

However, if the number of connected subsets of $(N, L)$ is super-polynomial in $|N|$, so is the running
time of Demange's algorithm, because for each player $i$ this algorithm considers
all feasible coalitions containing $i$. Now, for many $n$-node trees 
the number of connected subtrees is superpolynomial in $n$: 
for instance, this is the case for every tree with $\omega(\log n)$ leaves,
simply because we can delete any subset of leaves and still obtain a connected graph. Therefore,
it is natural to ask if checking all feasible coalitions is indeed necessary. Note
that when the goal is to find an individually stable partition, the answer to this question is ``no'':
Algorithm~\ref{alg:is:general} only considers some of the feasible coalitions, yet is capable of
finding an individually stable feasible outcome. In contrast, for the core it seems unlikely
that one can obtain a substantial improvement over the running time of Demange's algorithm: 
our next theorem shows that finding a core stable feasible partition of an
additively separable hedonic graph game is NP-hard under Turing reductions even 
if the underlying graph is a star and the input game is a symmetric enemy-oriented game.

\begin{theorem}\label{thm:CR}
If one can find a core stable feasible partition in a symmetric enemy-oriented graph game 
whose underlying graph is a star in time polynomial in the number of players 
then {\em P\,=\,NP}.
\end{theorem}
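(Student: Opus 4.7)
The plan is to give a Turing reduction from the NP-hard \textsc{Clique} problem: assuming a polynomial-time algorithm that outputs a core stable feasible partition in this setting, I will use it to compute the clique number of an arbitrary graph in polynomial time, which would imply $\mathrm{P}=\mathrm{NP}$.

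First I would establish a structural characterization of core stable partitions. Because the underlying graph is a star with center $s$, every feasible coalition of size at least two must contain $s$. In a symmetric enemy-oriented game, any coalition $X$ containing an enemy pair $\{i,j\}$ yields $i$ and $j$ utility at most $-|N|+(|X|-2)<0$, which is strictly worse than their singleton utility, so such a coalition cannot appear in any core stable partition (the affected player would strictly prefer to deviate alone). Hence every core stable partition has the form $\{C\}\cup\{\{j\}:j\notin C\}$, where $C\ni s$ is a \emph{friend-clique} (a set whose members pairwise have utility $+1$), or every block is a singleton; the latter case will be ruled out by construction because $s$ will have friends.

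Next I would show that such a $\pi$ is core stable if and only if $C$ is a \emph{maximum} friend-clique containing $s$. For the forward direction, if $C'\ni s$ is a friend-clique with $|C'|>|C|$, then $s$ and every $i\in(C'\cap C)\setminus\{s\}$ sees its utility jump strictly from $|C|-1$ to $|C'|-1$, while every $i\in C'\setminus C$ jumps strictly from $0$ to $|C'|-1\geq 1$; thus $C'$ strongly blocks $\pi$. For the reverse direction, any candidate strongly blocking coalition $C'$ must be a friend-clique (otherwise an enemy-party weakly prefers its status quo, killing strictness) containing $s$ (else $|C'|=1$), so $|C'|\leq|C|$; but then the center $s\in C\cap C'$ cannot strictly improve, ruling out the block.

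Given this characterization, the reduction from \textsc{Clique} is immediate. Given an instance $(G,k)$ with $V(G)=\{v_1,\ldots,v_n\}$, I construct the player set $N=\{s\}\cup V(G)$, the star $L=\{\{s,v_i\}:1\leq i\leq n\}$, and the symmetric enemy-oriented utility matrix $U(s,v_i)=1$ for all $i$, $U(v_i,v_j)=1$ if $\{v_i,v_j\}\in E(G)$, and $U(v_i,v_j)=-|N|$ otherwise. Friend-cliques of this game containing $s$ correspond bijectively to cliques of $G$, so by the characterization the assumed polynomial-time algorithm returns a partition whose unique non-singleton block has size $1+\omega(G)$; reading this off decides \textsc{Clique} in polynomial time. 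The main technical obstacle is the strict-versus-weak preference bookkeeping in the characterization: the entire argument hinges on $s$ being a common member of $C$ and of every alternative friend-clique $C'$, so that $s$'s non-strict preference alone suffices to kill a would-be strong blocker of equal or smaller size. Once this is pinned down, the reduction itself is routine.
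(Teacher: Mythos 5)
Your proposal is correct and follows essentially the same route as the paper: the identical Turing reduction from \textsc{Clique} via a star-shaped symmetric enemy-oriented game with a center $s$ who likes everyone, together with the characterization that an (automatically individually rational) core stable partition is exactly one whose unique non-singleton block is $\{s\}$ plus a maximum clique of $G$. The only differences are cosmetic: you read off $\omega(G)$ directly rather than comparing against the threshold $t$, and you spell out the strict-versus-weak bookkeeping around $s$ a bit more explicitly than the paper does.
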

\begin{proof}
We provide a reduction from the NP-complete {\sc Clique} problem~\cite{gj}.
An instance of {\sc Clique} is a pair $(G, t)$, where
$G$ is an undirected graph and $t$ is a positive integer. It is a ``yes''-instance
if $G$ contains a clique of size at least $t$ and a ``no''-instance otherwise.
We will show how a polynomial-time algorithm for our problem
can be used to decide {\sc Clique} in polynomial time.
 
Given an instance $(G, t)$ of {\sc Clique}, where $G=(V,E)$, we  
construct a symmetric enemy-oriented graph game as follows. We let $N=V\cup\{s\}$
and $L=\{\, \{s,v\} \mid v \in V \,\}$. We will now describe the symmetric matrix $U$. 
Briefly, player $s$ likes all other players and two players in $V$ like each other 
if and only if they are connected by an edge of $G$. Formally, we set
$U(s,v)=1$ for each $v \in V$ and for each $u, v\in V$ we set
$U(u, v)=1$ if $\{u, v\}\in E$ and
$U(u, v)=-|V|-1$ otherwise.

Let $\pi$ be an individually rational feasible partition of this game.
Note that all players in $N\setminus \pi(s)$ form singleton coalitions in $\pi$,
and $\pi(s)\setminus\{s\}$ is a clique in $G$.
We will now argue that $\pi$ is core stable if and only if $\pi(s)\setminus\{s\}$ is a maximum-size
clique in $G$. 

Indeed, if $C$ is a maximum-size clique in $G$
and $|\pi(s)\setminus\{s\}|<|C|$, every player in $C\cup\{s\}$ strictly prefers $C\cup\{s\}$
to its current coalition. Conversely, suppose that $\pi(s)\setminus\{s\}$ is a maximum-size clique,
yet coalition $X$ strictly blocks $\pi$. Then it has to be the case that $s\in X$, and hence
$|X|>|\pi(s)|$; but this means that $X\setminus\{s\}$ is not a clique,
and therefore players in $X\setminus\{s\}$ prefer $\pi$ to $X$, a contradiction.

It follows that, by looking at a core stable feasible outcome $\pi$, 
we can decide whether $G$ contains a clique of size at least $t$.
\end{proof}
Dimitrov et al.~\cite{Dimitrov2006} show that finding a core outcome in
symmetric enemy-oriented games is NP-hard; however, in their model there is no constraint
on communication among the players, i.e., their result is for the case where $(N, L)$
is a clique, whereas our result holds even if $(N, L)$ is a star.

\subsection{Computational complexity of SCR}
\noindent
Unlike core stable outcomes, strictly core stable outcomes need not exist 
even in symmetric enemy-oriented games on stars. Consider, for instance, a variant 
of our parliamentary coalition formation example (Example~\ref{ex:Parliament})
where the centrist party ($\sfc$) is equally happy to collaborate
with the left-wing party ($\sfl$) or the right-wing party ($\sfr$), but the left-wing party
and the right-wing party hate each other. This setting can be captured
by a symmetric enemy-oriented graph game whose underlying graph is a path,
and whose core stable feasible partitions are $\pi_1=\{\{\sfl,\sfc\}, \{\sfr\}\}$
and $\pi_2=\{\{\sfl\},\{\sfc,\sfr\}\}$. However, neither $\pi_1$ nor $\pi_2$
is in the strict core: $\pi_1$ is weakly blocked by $\{\sfc, \sfr\}$
and $\pi_2$ is weakly blocked by $\{\sfl, \sfc\}$.

Our next theorem shows that checking whether a given symmetric enemy-oriented hedonic graph game admits a feasible outcome in the strict core is NP-hard with respect to Turing reductions, even if the underlying graph is a star.

\begin{lemma}
If one can determine whether a graph has a unique maximum-size clique in time polynomial in the number of players then {\em P\,=\,NP}.
\end{lemma}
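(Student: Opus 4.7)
The plan is to give a polynomial-time Turing reduction from the NP-hard problem \textsc{Clique} to the uniqueness problem under consideration. Concretely, a polynomial-time algorithm $A$ that decides whether a graph has a unique maximum-size clique can be used as an oracle to compute, in polynomial time, the clique number $\omega(G)$ of an arbitrary input graph $G$ on $n$ vertices; this immediately answers any instance $(G,t)$ of \textsc{Clique} by comparing $\omega(G)$ with $t$, and hence implies $\text{P}=\text{NP}$.

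The key construction is, for each $k\in\{1,2,\ldots,n+1\}$, the graph $G_k$ obtained as the vertex-disjoint union of $G$ with a freshly introduced clique $K_k$ on $k$ new vertices. Since $G$ and $K_k$ share no vertex, every clique of $G_k$ lies entirely in one component, and therefore $\omega(G_k)=\max\{\omega(G),k\}$. The maximum cliques of $G_k$ fall into three regimes: for $k<\omega(G)$ they coincide with the maximum cliques of $G$; for $k=\omega(G)$ they are the maximum cliques of $G$ together with $K_k$; and for $k>\omega(G)$ the unique maximum clique is $K_k$. In particular, $G_k$ always has at least two maximum cliques when $k=\omega(G)$, and always has a unique maximum clique when $k>\omega(G)$.

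Querying $A$ on each of the $n+1$ graphs $G_1,\ldots,G_{n+1}$ therefore reveals $\omega(G)$ as the largest index $k$ for which $A$ answers ``not unique'', which is well-defined because $k=\omega(G)$ itself is always such an index. This reduction involves only $O(n)$ oracle calls and polynomial-size auxiliary computation, so if $A$ runs in polynomial time then so does the whole procedure. The one point that requires care is to confirm that the ``largest non-unique $k$'' characterisation holds uniformly whether $G$ itself has one maximum clique (in which case $k=\omega(G)$ is the only non-unique index in the sequence) or many (in which case every $k\le\omega(G)$ is non-unique); in both cases the maximum non-unique index coincides with $\omega(G)$, so the reduction is correct regardless of the structure of $G$.
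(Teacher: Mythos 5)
Your proposal is correct and follows essentially the same route as the paper: a Turing reduction from \textsc{Clique} that forms the disjoint union of $G$ with a fresh $k$-clique for each $k$ and reads off $\omega(G)$ as the largest $k$ for which the maximum-size clique is not unique. The only (immaterial) differences are that you range $k$ up to $n+1$ rather than $n$ and spell out the edge-case analysis for $k<\omega(G)$ slightly more explicitly.
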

\begin{proof}
We define {\sc Unique Clique} as the decision problem of determining whether a graph 
has a unique maximum-size 
clique. That is, let $s$ be the size of a maximum clique in $G$;
$G$ is a ``yes''-instance of {\sc Unique Clique} if it contains exactly one clique of size $s$
and a ``no''-instance otherwise. {\sc Clique} admits a Turing reduction 
to {\sc Unique Clique}. Specifically, given an instance $(G,t)$ of 
{\sc Clique} where $G=(V,E)$, we construct $|V|$ instances of {\sc Unique Clique} as follows.
For each $s=1, \dots, |V|$, 
let $C_s$ be a set of size $s$ with $V\cap C_s=\emptyset$, and let  
$H_s = (V \cup C_s, E_s)$, where $\{u, v\}\in E_s$ if and only if $u,v \in C_s$, or 
$u,v \in V$ and $\{u,v\}\in E$. Note that the maximum clique size in $G$
is $r$ if and only if $H_r$ is a ``no''-instance of {\sc Unique Clique}, 
but $H_s$ is a ``yes''-instance of {\sc Unique Clique} for $r=s+1, \dots, |V|$.
Hence, a polynomial-time algorithm for {\sc Unique Clique} can be used to decide {\sc Clique} 
in polynomial time.
\end{proof}

\begin{theorem}\label{thm:SCR:enemy}
If there exists a polynomial-time algorithm that, given 
a symmetric enemy-oriented graph game whose underlying graph is a star, decides whether 
this game has a strictly core stable feasible partition then {\em P\,=\,NP}.
\end{theorem}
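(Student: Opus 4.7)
The plan is to reuse the reduction from Theorem~\ref{thm:CR} together with the characterization of uniqueness of the maximum clique given by the preceding lemma. Given an undirected graph $G=(V,E)$, construct the symmetric enemy-oriented graph game on the star $(N,L)$ with $N=V\cup\{s\}$, $L=\{\{s,v\}\mid v\in V\}$, and the utility matrix $U$ defined by $U(s,v)=1$ for $v\in V$, $U(u,v)=1$ if $\{u,v\}\in E$, and $U(u,v)=-|V|-1$ otherwise. Our goal is to show that this game admits a strictly core stable feasible partition if and only if $G$ has a unique maximum clique.

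As in the proof of Theorem~\ref{thm:CR}, every individually rational feasible partition $\pi$ has the form $\pi=\{\{s\}\cup C\}\cup\{\{v\}\mid v\in V\setminus C\}$, where $C\subseteq V$ is a clique in $G$ (possibly empty); feasible coalitions not containing $s$ are singletons because $(N,L)$ is a star. A weakly blocking coalition not containing $s$ must itself be a singleton $\{v\}$, but such a coalition cannot weakly block, since individual rationality yields $\pi(v)\succeq_{v}\{v\}$. So it suffices to analyse weakly blocking coalitions of the form $X=\{s\}\cup C'$, where $C'$ is a clique in $G$. A direct computation gives: the utility of $s$ (and of any $v\in C'$) in $X$ is $|C'|$, while the utility of $v\in C$ in $\pi$ is $|C|$ and players in $V\setminus C$ get utility $0$ in $\pi$.

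Putting these inequalities together, $X=\{s\}\cup C'$ weakly blocks $\pi$ iff (a) $|C'|\geq|C|$ and (b) either $|C'|>|C|$, or $C'\setminus C\neq\emptyset$ (so that some newcomer strictly improves). Hence $\pi$ is strictly core stable iff $C$ is a maximum clique of $G$ and no other maximum clique exists, i.e.\ $C$ is the unique maximum clique of $G$. Consequently, the constructed game has a strictly core stable feasible partition iff $G$ is a ``yes''-instance of \textsc{Unique Clique}.

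Given a hypothetical polynomial-time algorithm for our problem, we can therefore decide \textsc{Unique Clique} in polynomial time by running it on the game above; combined with the preceding lemma, this yields $\mathrm{P}=\mathrm{NP}$. The main thing to be careful with is step (3) of the case analysis: verifying that when $C$ is a unique maximum clique, no weakly blocking coalition exists, which hinges on the observation that $|C'|=|C|$ and $C'\neq C$ already suffice for weak blocking via a newcomer in $C'\setminus C$.
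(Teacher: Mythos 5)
Your proposal is correct and follows essentially the same route as the paper: the identical star game from Theorem~\ref{thm:CR}, the characterization ``SCR partition exists iff $G$ has a unique maximum clique,'' and the preceding lemma's Turing reduction from \textsc{Clique} to \textsc{Unique Clique}. Your explicit iff-characterization of weak blocking (including the case $|C'|=|C|$, $C'\neq C$, where a newcomer in $C'\setminus C$ strictly improves) is just a slightly more detailed rendering of the paper's argument.
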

\begin{proof}
We provide a reduction from {\sc Unique-Clique}. 
Given an undirected graph $G=(V,E)$, we construct the same symmetric enemy-oriented graph game $(N,U,L)$ as in the proof of Theorem \ref{thm:CR}. We will argue that the $G$ is a ``yes''-instance of {\sc Unique Clique}
if and only if $(N, U, L)$ has a feasible outcome that is strictly core stable.

Suppose that $G$ contains a unique maximum-size clique $C$. Then the 
feasible partition $\pi=\{ C\cup \{s\} \}\cup \{\, \{v\} \mid v \in V \setminus C \,\}$ is strictly core stable. Indeed, $\pi$ is individually rational. Suppose that it is weakly blocked by some coalition $X$. We have $s\in X$ by connectivity, and by individual rationality $X\cap V$ is a clique in $G$. Then, by the unique maximality of $C$, we have $|X \cap V| < |C|$. This implies that $s$ strictly prefers $C$ to $X$. Thus, we obtain a contradiction.
Conversely, suppose that there exists a strictly core stable feasible partition $\pi$ of $N$. If $C$ is a maximum-size clique and $C \neq \pi(s)\setminus \{s\}$, then $C \cup \{s\}$ weakly blocks $\pi$. Hence, $\pi(s)\setminus \{s\}$ is the unique maximum-size clique of $G$.
\end{proof}

Note that in the symmetric enemy-oriented game used for proving Theorem \ref{thm:CR} and Theorem \ref{thm:SCR:enemy}, a strictly core stable feasible partition of the game exists if and only if the core stable partition is unique. Combining this with Theorem \ref{thm:SCR:enemy} yields the following corollary.

\begin{cor}
If there exists a polynomial-time algorithm that, given 
a symmetric enemy-oriented graph game whose underlying graph is a star, decides whether 
this game has a unique core stable feasible partition then {\em P\,=\,NP}.
\end{cor}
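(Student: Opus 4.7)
The plan is to reuse the exact construction from the proof of Theorem~\ref{thm:SCR:enemy} (which is the same as that of Theorem~\ref{thm:CR}) and observe that in those particular games, uniqueness of the core stable feasible partition coincides with existence of a strictly core stable feasible partition. Combined with the preceding lemma (which states that deciding uniqueness of a maximum-size clique is NP-hard under Turing reductions), this will yield the desired conclusion.

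Concretely, given an instance $G=(V,E)$ of \textsc{Unique Clique}, I would build the symmetric enemy-oriented graph game $(N,U,L)$ with $N=V\cup\{s\}$ and $L=\{\,\{s,v\}\mid v\in V\,\}$ and with the same utilities as in the proof of Theorem~\ref{thm:CR}. The first step is to recall the structural observation used there: in any individually rational feasible partition $\pi$, the players in $N\setminus\pi(s)$ must be singletons and $\pi(s)\setminus\{s\}$ must be a clique of $G$; moreover, $\pi$ is core stable if and only if $\pi(s)\setminus\{s\}$ is a \emph{maximum}-size clique. Hence the number of core stable feasible partitions of the game equals the number of maximum-size cliques of $G$.

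The main step is then a one-line biconditional: the core stable feasible partition is unique if and only if $G$ contains a unique maximum-size clique. One direction is immediate from the counting above; for the other, two distinct maximum-size cliques $C_1,C_2$ give two distinct partitions $\{C_1\cup\{s\}\}\cup\{\{v\}\mid v\notin C_1\}$ and $\{C_2\cup\{s\}\}\cup\{\{v\}\mid v\notin C_2\}$, both core stable. Combined with the characterization established in Theorem~\ref{thm:SCR:enemy} that a strictly core stable feasible partition of this game exists precisely when $G$ has a unique maximum-size clique, this yields the equivalence quoted in the remark preceding the corollary: strictly core stable feasible partition exists $\Leftrightarrow$ core stable feasible partition is unique.

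With this equivalence in hand, any polynomial-time algorithm that decides whether our symmetric enemy-oriented star game has a unique core stable feasible partition could be invoked in place of a decision procedure for the existence of a strictly core stable feasible partition, and hence, by the preceding lemma together with Theorem~\ref{thm:SCR:enemy}, would decide \textsc{Clique} in polynomial time. The only delicate point — really the only step that is not bookkeeping — is being careful that the biconditional goes in both directions on the \emph{same} family of games produced by the reduction; since both core stability and strict core stability of individually rational partitions in this construction are controlled entirely by the clique structure on $\pi(s)\setminus\{s\}$, no additional argument beyond what has already been developed in Theorems~\ref{thm:CR} and~\ref{thm:SCR:enemy} is needed.
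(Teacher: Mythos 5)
Your proposal is correct and follows essentially the same route as the paper, which likewise observes that in the construction of Theorems~\ref{thm:CR} and~\ref{thm:SCR:enemy} the core stable feasible partitions are in bijection with the maximum-size cliques of $G$, so that uniqueness of the core stable partition coincides with existence of a strictly core stable one, and then invokes the NP-hardness of \textsc{Unique Clique}. Your write-up merely makes explicit the counting argument that the paper leaves as a one-sentence remark.
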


For the broader class of symmetric additively separable hedonic graph games on stars,
we obtain an NP-hardness result under the more standard notion of a many-one reduction  
(for games on cliques, this follows from the results of Aziz et al.~\cite{Aziz2013}).

\begin{theorem}\label{thm:SCR:additive}
Given a symmetric additively separable hedonic graph game whose underlying graph is a star,
it is {\em NP}-hard to determine
whether it has a strictly core stable feasible partition.
\end{theorem}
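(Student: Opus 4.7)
The plan is a polynomial-time many-one reduction from Exact Cover by $3$-Sets (X3C) to the problem of deciding whether a symmetric additively separable hedonic graph game on a star admits a strictly core stable feasible partition. This mirrors and adapts the known hardness reduction of Aziz et al.~\cite{Aziz2013} for SASHGs on complete graphs to the star setting.

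Given an X3C instance with universe $E$ of size $3q$ and a family $\mathcal{S}$ of three-element subsets of $E$, I would construct a star game with center $s$ and a leaf player for every element of $E$ and every set in $\mathcal{S}$, together with a small number of auxiliary players if needed to balance individual rationality. The symmetric utility matrix $U$ would assign: (i) a positive weight between an element $e$ and a set $S$ when $e\in S$, and a large negative weight otherwise; (ii) a small negative weight between any two distinct set players, so that only about $q$ of them can profitably coexist; (iii) positive weights $U(s,\cdot)$ calibrated so that the center gains most when grouped with all element players and exactly $q$ set players. Because the graph is a star, every feasible coalition of size at least two contains $s$, so each feasible partition is determined by the single subset $X\subseteq N\setminus\{s\}$ that is grouped with $s$, the remaining leaves forming singletons.

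I would then show that the game admits a strictly core stable feasible partition if and only if the X3C instance has an exact cover. For the forward direction, from an exact cover $\mathcal{S}^{\star}\subseteq\mathcal{S}$ I would take $X^{\star}=E\cup\mathcal{S}^{\star}$ and verify by case analysis on any candidate weak blocker $Y\cup\{s\}$ that at least one member of $Y\cup\{s\}$ strictly prefers the partition associated with $X^{\star}$. For the converse, I would show that any strictly core stable $\pi$ must have $\pi(s)\supseteq E$ and $\pi(s)\cap\mathcal{S}$ must be a collection of sets covering $E$ exactly once; otherwise an explicit weak blocker can be produced by adding a missing element, swapping overlapping sets, or removing a redundant set while making some member strictly better off.

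The main obstacle is the subtlety of weak blocking, which requires only one player to strictly improve while all other members of the coalition remain indifferent. In the forward direction the weights must be engineered so that no alternative feasible coalition containing $s$ leaves every member at least as well off, a constraint easily broken by ties between two otherwise comparable subsets. Maintaining this balance under the symmetry constraint $U(i,j)=U(j,i)$ is the technical heart of the argument and may require incommensurable weights or small auxiliary gadgets to inject asymmetry where needed. The converse direction requires exhibiting an explicit weak blocker for every ``bad'' configuration of $\pi(s)$, which in turn calls for a careful combinatorial analysis across the possibilities of missing elements, uncovered elements, and overlapping or redundant set players.
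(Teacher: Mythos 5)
Your proposal is a plan rather than a proof: the utility matrix is never pinned down, and you yourself flag the ``technical heart'' (calibrating symmetric weights so that no alternative coalition weakly blocks) as unresolved. That unresolved point is not a routine detail---it is exactly where a naive version of your construction fails. On a star every non-singleton feasible coalition contains the center, so a strictly core stable partition is determined by $\pi(s)$; if the X3C instance has \emph{two} distinct exact covers $\mathcal{S}_1^{\star}$ and $\mathcal{S}_2^{\star}$, and set players receive strictly positive utility in the intended coalition $E\cup\mathcal{S}^{\star}\cup\{s\}$, then the coalition built from $\mathcal{S}_2^{\star}$ weakly blocks the partition built from $\mathcal{S}_1^{\star}$ (the newly included set players strictly improve from $0$, everyone else is no worse off), and symmetrically---so a ``yes'' instance of X3C need not yield a strictly core stable outcome, breaking the forward direction. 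This is precisely the non-uniqueness obstruction that forces the paper's enemy-oriented result (Theorem~\ref{thm:SCR:enemy}) to reduce from \textsc{Unique Clique} rather than \textsc{Clique}.

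The paper's actual proof of Theorem~\ref{thm:SCR:additive} reduces from \textsc{Clique} and defuses this obstruction by calibration: with $U(v,b)=1$ and $U(u,v)=-1/(t-1)$ on clique edges, every leaf player in a size-$t$ clique coalition with $b$ has utility exactly $0$, i.e.\ is \emph{indifferent} between that coalition and being a singleton, while $b$ gets exactly $t$; hence swapping one size-$t$ clique for another makes nobody strictly better off and is not a weak block. The existence question is then controlled by two extra leaves $a,c$ with $U(a,b)=U(c,b)=t-1$, whose coalitions $\{a,b\}$ and $\{b,c\}$ weakly block each other unless $b$ secures utility at least $t$ from $V$. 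If you want to salvage your X3C route, you must impose the same discipline: choose weights so that every element player and every set player has utility exactly $0$ in the target coalition (e.g.\ element--set weight $+1$ against set--set weight $0$ and element--center, set--center weights summing each leaf to zero), and add a gadget that makes the all-singleton and partial-cover outcomes weakly blocked. Until those weights are written down and both directions of the equivalence are verified against arbitrary weak blockers, the argument has a genuine gap.
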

\begin{proof}
Again, we provide a reduction from {\sc Clique}.
Given an undirected graph $G=(V,E)$ and a positive integer $t\ge 2$, 
we construct a symmetric additively separable graph game $(N,U,L)$ where $N=\{a,b,c\} \cup V$, 
$L=\{\{a,b\},\{b,c\}\}\cup\{\,\{b,v\} \mid v \in V \,\}$. Let $M=|N|+1$. 
The utility matrix $U:N\times N \rightarrow \bbR$ is given as 
follows (see Figure~\ref{fig:ex_scr}):
\begin{align*}
&U(a,b)=U(c,b)=t-1, U(a,c)=-M,\\ 
&U(a,v)=U(c,v)=-M,  U(b, v)=1~\mbox{for each}~v\in V,\\
&U(u, v)=-1/(t-1)~\mbox{if}~\{u, v\}\in E\\
&\mbox{and}~U(u,v)=-M~\mbox{otherwise, for all $u,v\in V$}. 
\end{align*}

Suppose that $G$ contains a clique $C$ of size $t$. Then the 
feasible partition $\pi=\{ \{a\},\{c\},C\cup \{b\} \}\cup \{\, \{v\} \mid v \in V \setminus C \,\}$ is 
strictly core stable. Indeed, $\pi$ is individually rational. Suppose that it is weakly blocked
by some coalition $X$. The coalitions $\{a, b\}$ and $\{b,c\}$ are not weakly blocking,
so $X\neq\{a,b\}, \{b,c\}$ and hence by individual rationality $a, c\not\in X$.
Further, we have $b\in X$ by connectivity, and by individual rationality 
$X\cap V$ is a clique in $G$.
If $|X \cap V| < t$, we have $\pi(b) \succ_b X$, and if $|X\cap V|>t$ we have
$\pi(i)\succ_i X$ for each $i\in X\cap V$.
When $|X \cap V|= t$, the utilities of all players in $X$ are the same as in $\pi$
($t$ for $b$, $0$ for other players), so $X$ is not a weakly blocking coalition
in this case either. Thus, we obtain a contradiction.

Conversely, suppose that there exists a strictly core stable feasible partition $\pi$ of $N$. 
We will prove that $\pi(b)\setminus \{b\}$ is a clique of 
size at least $t$ in $G$. If $\pi$ contains $\{a, b\}$, it is weakly blocked by $\{b, c\}$
and vice versa, so $\pi$ contains neither of these two coalitions.
By individual rationality, it is not possible that $\pi(b)=\{a,b,c\}$, or that $\pi(b)\cap 
V \neq \emptyset$ and $\pi(b) \cap \{a,c\}\neq \emptyset$. 
Thus, $\pi(b)\subseteq V\cup\{b\}$.
In order for $\pi$ not to be weakly blocked by $\{a,b\}$, the coalition $\pi(b)$
must contain at least $t$ players from $V$. By individual rationality,
these $t$ players must form a clique in $G$.  
\end{proof}

\begin{figure}
\centering
\begin{tikzpicture}[scale=0.8, transform shape]
	\def \radius {2.2cm}
	\node[draw, circle](b) at (0,0) {$b$};
	\node[draw, circle](c) at ({55}:\radius) {$c$};
	\node[draw, circle](a) at ({125}:\radius) {$a$};
	\node[draw, circle,fill=gray!50](node1) at ({195}:\radius) {$v_{1}$};
	\node[draw, circle,fill=gray!50](node2) at ({232}:\radius) {$v_{2}$};
	\node[draw, circle,fill=gray!50](node3) at ({270}:\radius) {$v_{3}$};
	\node[draw, circle,fill=gray!50](node4) at ({345}:\radius) {$v_{n}$};
    
    \DoubleLine{b}{a}{-,dashed}{-,black,ultra thick};
    \DoubleLine{b}{c}{-,black,ultra thick}{-,dashed};    
    \DoubleLine{node4}{b}{-,black,ultra thick}{-,dashed};
	\DoubleLine{node2}{node3}{-,gray}{-,dashed};	
	\draw[-, >=latex,ultra thick] (b)--(node1);
	\draw[-, >=latex,ultra thick] (b)--(node2);
	\draw[-, >=latex,ultra thick] (b)--(node3);	
	\draw[-, >=latex,dashed] (c)--(a);
	\draw[-, >=latex,dashed] (node2)--(node1);
	\draw[-, >=latex,dashed] (node4)--(c);	
	\draw[-, >=latex,gray] (node2)--(node4);
	\draw[-, >=latex,gray] (node3)--(node4);

	\node[below] at ({25}:{2.5}) {$-M$};
	\node[right] at ({45}:{0.8}) {$t-1$};	
	\node[right] at ({102}:{\radius}) {$-M$};
	\node[left] at ({135}:{0.8}) {$t-1$};
	\node[left] at ({217}:\radius) {$-M$};
	\node[below] at ({247}:\radius) {$-\frac{1}{t-1}$};
	\node[above] at ({350}:{\radius/2}) {$1$};
	
	\draw [dotted,thick] (0.7,-\radius+0.5) arc [radius=2, start angle=290, end angle= 320];
\end{tikzpicture}
\caption{Graph used in the proof of Theorem~\ref{thm:SCR:additive}. Thick black lines represent communication links between players, whereas gray lines stand for edges of the given instance $G$. Values on dashed lines are utilities of players.
\label{fig:ex_scr}}
\end{figure}
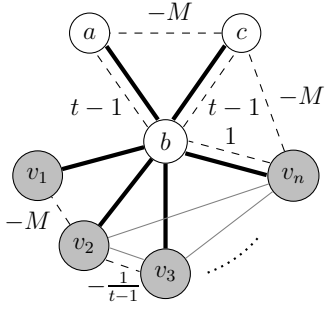

\section{In-Neighbor Stability}\label{sec:INS}
\noindent
In many real-life situations, when people move from one group to another, they need approvals from 
their contacts in the new group. 
Suppose, for instance, that Alice is an early-career researcher 
applying for academic positions in universities: her application 
is unlikely to be accepted if it is rejected by her prospective mentors (even if Alice
expects to collaborate with several other faculty members as well). 

Motivated by these considerations, we will now describe a new notion of stability, which is specific to hedonic
graph games. 
 
\begin{definition}
Given a hedonic graph game $(N,(\succeq_{i})_{i \in N},L)$, we say that $j$ is a {\em 
neighbor} of $i$ if $\{i,j\} \in L$. A feasible deviation of a player $i$ to $X \not\in\calN(i)$ 
is called

\begin{itemize}
\item {\em in-neighbor feasible} if it is NS feasible and accepted by all of $i$'s neighbors in $X$.
\item {\em IR-in-neighbor feasible} if it is in-neighbor feasible and for all $j\in X$ it holds that
$X\cup \{i\}\succeq_j \{j\}$.
\end{itemize}
A feasible partition $\pi$ is called {\em in-neighbor stable (INS)} (respectively,
{\em IR in-neighbor stable (IR-INS)}) if no player $i$
has an in-neighbor feasible deviation (respectively, an IR-in-neighbor feasible deviation) 
from $\pi(i)$ to a coalition $X \in \pi \cup \{\emptyset\}$.
\end{definition}

Note that every INS partition is IR-INS, and each IR-INS partition
is individually stable. However, the converse may not be true:
partition $\pi_1$ in Example~\ref{ex:Parliament} is individually stable, but admits an in-neighbor 
feasible deviation. Indeed, all IR partitions in that example are not in-neighbor stable,
so the existence of in-neighbor stable outcomes is not guaranteed,
even in additively separable games on paths. Note however, that 
$\pi_1$ is IR-in-neighbor stable.

The following example even shows that an individually stable partition is not necessarily IR-in-neighbor stable. 

\begin{example}\label{ex2:Parliament}
{\em
Consider again the coalition formation problem in a parliament consisting of five parties:
extreme-left-wing ($\sfel$), left-wing ($\sfl$), centrist ($\sfc$), right-wing ($\sfr$), and extreme-right-wing ($\sfer$). As in Example \ref{ex:Parliament}, 
players cannot form a coalition without political intermediaries.
We formulate this problem as an additively separable graph game $(N,U,L)$ where $N=\{\sfel,\sfl,\sfc,\sfr,\sfer\}$, 
$L=\{\{\sfel,\sfl\},\{\sfl,\sfc\},\{\sfc,\sfr\},\{\sfr,\sfer\}\}$, and the utility matrix $U$ is given by
\begin{align*}
&U(\sfel, \sfl)=-1, U(\sfel, \sfc)=2, U(\sfel, \sfr)=0, U(\sfel, \sfer)=0,\\
&U(\sfl, \sfel)=0, U(\sfl, \sfc)= 0, U(\sfl, \sfr)=-10, U(\sfl, \sfer)=0,\\
&U(\sfc, \sfel)=-2, U(\sfc, \sfl)=2, U(\sfc, \sfr)=2, U(\sfc, \sfer)=-2,\\
&U(\sfr, \sfel)=0, U(\sfr, \sfl)= -10, U(\sfr, \sfc)=0, , U(\sfr, \sfer)=0,\\
&U(\sfer, \sfel)=0, U(\sfer, \sfl)= 0, U(\sfer, \sfc)=2, , U(\sfer, \sfr)=-1
\end{align*}

The resulting preference profile is as follows: 
\begin{align*}
&\sfel~:~ \{\sfel,\sfl,\sfc\} \succ_{\sfl} \{\sfel\} \\
&\sfl~:~\{\sfel,\sfl,\sfc\} \sim_{\sfl} \{\sfl\} \\
&\sfc~:~ \{\sfl,\sfc\} \sim_{\sfc} \{\sfc,\sfr\} \succ_{\sfc} \{\sfel,\sfl,\sfc\} \sim_{\sfc} \{\sfc,\sfr,\sfer\} \sim_{\sfc} \{\sfc\}\\
&\sfr~:~ \{\sfc,\sfr,\sfer\} \sim_{\sfr} \{\sfr\}\\
&\sfer~:~ \{\sfc,\sfr,\sfer\} \succ_{\sfr} \{\sfer\}
\end{align*}
In the above, we omit the coalitions that are not individually rational. The individually rational partitions of this game are 
$\pi_1=\{\{\sfel\},\{\sfl\},\{\sfc\},\{\sfr\},\{\sfer\}\}$,
$\pi_2=\{\{\sfel\},\{\sfl\},\{\sfc,\sfr,\sfer\}\}$,
$\pi_3=\{\{\sfel,\sfl,\sfc\},\{\sfr\},\{\sfer\}\}$,
$\pi_4=\{\{\sfel\},\{\sfl\},\{\sfc,\sfr\},\{\sfer\}\}$,
and $\pi_5=\{\{\sfel\},\{\sfl,\sfc\},\{\sfr\},\{\sfer\}\}$. 
However, there is no IR-in-neighbor stable partition in this game: in $\pi_1$ and $\pi_{2}$, the deviation of player $\sfc$ to $\{\sfl\}$ is IR-in-neibhbor feasible; in $\pi_3$, the deviation of player $\sfc$ to $\{\sfr\}$ is IR-in-neibhbor feasible; in $\pi_4$, the deviation of player $\sfer$ to $\{\sfc,\sfr\}$ is IR-in-neibhbor feasible; finally, in $\pi_5$, the deviation of player $\sfel$ to $\{\sfl,\sfc\}$ is IR-in-neibhbor feasible.
}
\end{example}

\subsection{Computational complexity of INS}
\noindent
We will now present an algorithm to determine the existence of NS, INS and IR-INS outcomes for games on arbitrary 
acyclic graphs. 

\begin{theorem}\label{thm:NS_NIS_IRNIS}
Suppose that we are given oracle access to the preference relations $\succeq_i$ of all players
in a hedonic graph game $\calG=(N,(\succeq_{i})_{i \in N},L)$, where $(N,L)$ is a forest.
Then we can decide whether $\calG$ admits a Nash stable, in-neighbor stable
or IR-in-neighbor stable feasible outcome (and find one if it exists) in time polynomial in
the number of connected subsets of $(N, L)$.
\end{theorem}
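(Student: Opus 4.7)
The plan is to design a bottom-up dynamic programming (DP) procedure on the rooted forest, in the same spirit as Algorithm~\ref{alg:is:general} and Demange's algorithm for the core. The central structural observation is that because $(N,L)$ is a forest, each subtree $\suc(k,A^r)$ is attached to the rest of its component only through $k$; hence in any feasible partition the coalition of every $j\in \suc(k,A^r)\setminus\{k\}$ lies entirely inside $\suc(k,A^r)$, while $k$'s coalition is the only one that may extend outside via $k$'s parent. This localization is what allows us to decide, subtree by subtree, which coalitions containing $k$ are compatible with a stable configuration in $\suc(k,A^r)$. We can also treat each tree component separately: coalitions must be connected, so no deviation can cross components, and a stable partition of the whole game is just the union of stable partitions of the components.

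Root each component at an arbitrary node and process vertices in post-order. For each node $k$, compute a set $g(k)\subseteq \calF_L(k)$ of \emph{achievable} coalitions: a coalition $C$ belongs to $g(k)$ iff $C$ is individually rational, and there exists a feasible partition $\pi$ of $\suc(k,A^r)$ with $\pi(k)\cap \suc(k,A^r)=C\cap \suc(k,A^r)$ (so the full $C$ is allowed to include ancestors of $k$) in which no player of $\suc(k,A^r)$ has a stability-violating deviation, under the chosen notion, to any coalition of $\pi$ adjacent to it in the graph. The transition has a clean form: for $k$ and a candidate $C\in \calF_L(k)$, examine each child $c$ of $k$ in the rooted tree and require either (i)~$c\in C$, in which case $\pi(c)=C$ and we demand $C\in g(c)$, letting the recursive DP certify the partition of $\suc(c,A^r)$ consistent with $\pi(c)=C$; or (ii)~$c\notin C$, in which case $\pi(c)$ must be some coalition $C'\subseteq \suc(c,A^r)$ and we require the existence of $C'\in g(c)$ with $C'\subseteq \suc(c,A^r)$ such that neither $c$ has a stability-violating deviation from $C'$ to $C$ nor $k$ has one from $C$ to $C'$.

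The three stability notions differ only in the definition of a blocked deviation: for NS, check strict preference of the deviator; for INS, additionally require acceptance by the deviator's unique graph-neighbor in the destination coalition (which is the other endpoint of the tree edge crossed by the deviation); for IR-INS, further require that the destination coalition with the deviator added remains individually rational for every one of its members. Each such test uses only a polynomial number of preference queries. Individual rationality with respect to $\emptyset$ is handled by restricting $g(k)$ to IR coalitions. Correctness is proved by induction on $\height(k,A^r)$: any potential deviation by $j\in \suc(k,A^r)\setminus\{k\}$ targets a coalition adjacent to $j$, which necessarily lies entirely in $\suc(k,A^r)$ and hence in some smaller subtree where the DP has already checked it; a deviation across the edge between $k$ and its parent is examined exactly when we process $k$'s parent and pair $C$ with the chosen entry of $g(k)$. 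A partition witnessing $g(r)\neq\emptyset$ at each component root $r$ is reconstructed by standard back-tracking through the DP table.

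For the running time, the total number of DP states is at most $\sum_{k\in N}|\calF_L(k)|\le |N|\cdot|\calF_L|$; each transition examines at most $|N|$ children and, for each boundary child, iterates over at most $|\calF_L|$ candidates $C'$, performing polynomially many preference queries per tuple. This yields a total running time polynomial in $|N|$ and $|\calF_L|$. The main obstacle is to disentangle which stability checks are ``local'' to a subtree (and thus discharged by the recursive call) from those that cross the subtree's boundary edge (and thus must be deferred to the parent); the key is that the unique parent/child edge is the single interface through which coalitions on the two sides can interact, which keeps the DP state down to a coalition $C$ and avoids any need to encode the deferred constraints in more elaborate form.
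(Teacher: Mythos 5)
Your proposal is correct and follows essentially the same route as the paper's proof: a bottom-up dynamic program over the rooted tree whose states are feasible coalitions, whose transitions check only the two possible deviations across each boundary tree-edge (the boundary child moving up, or its parent moving down), and whose correctness is established by induction on height, with the $\emptyset$-deviation absorbed into an individual-rationality filter. The only (harmless) difference is bookkeeping: the paper indexes each coalition once at its topmost node and checks all of its boundary children there, whereas you index states by every node--coalition pair (allowing the coalition to extend to ancestors) and recurse child by child; both give the same $\mathrm{poly}(|N|,|\calF_L|)$ bound.
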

\begin{proof}
Our algorithm is similar to Demange's algorithm for the core~\cite{Demange2004}. 
Again, we assume that $(N,L)$ is a tree. Given a hedonic graph game 
$(N,(\succeq_{i})_{i \in N},L)$, we make a rooted tree 
$(N,A^{r})$ by orienting edges in $L$. Then, for each player $i \in N$ and each 
$X \subseteq \suc(i,A^{r})$ with $X\in\calF_L(i)$, we determine whether there exists a stable partition $\pi$ of 
$\suc(i,A^{r})$ with $X\in\pi$. We set $f(X)=1$ if such a partition exists, and set 
$f(X)=0$ otherwise. To this end, for each $j \in \ch(X,A^{r})$ we try to find a coalition
$X_j\in\calF_L(j)$, $X_j\subseteq \suc(j,A^{r})$ such that no player wants to move across the ``border''
between $X$ and $\ch(X, A^r)$ under the given stability requirement. 
A stable solution exists if and only if $f(X)=1$ for some 
coalition $X\in\calF_L(r)$. 
Algorithm~\ref{alg:INS} describes in detail how $f(X)$ is computed.

\begin{algorithm}                      
\caption{Determining the existence of $\alpha$ feasible partitions, 
where $\alpha \in \{\mbox{NS, INS, IR-INS}\}$
\label{alg:INS}                          
}         
\begin{algorithmic}[1]                  
\REQUIRE tree $(N,L)$, $r\in N$, oracles for $\succeq_i$, $i\in N$.
\ENSURE $f:\calF_{L} \rightarrow \{0,1\}$.
\STATE make a rooted tree $(N,A^{r})$ with root $r$ by orienting all the edges in $L$.
\STATE initialize $f(X)\leftarrow 1$ for $X \in \calF_{L}$.
\FOR{$t=0,\ldots,\height(r,A^{r})$}
\FOR{$i \in N$ with $\height(i,A^{r})=t$} 
\FOR{$X \in \calF_{L}(i)$ such that $X\subseteq\suc(i,A^{r})$}
\IF{$X$ is not individually rational}\label{step:IR}
\STATE $f(X) \leftarrow 0$
\ELSE
\FOR{$j \in \ch(X,A^{r})$}
\IF{for each $X_{j} \in \calF_{L}(j)$ such that $X_{j}\subseteq\suc(j,A^{r})$ and $f(X_{j})=1$ 
the deviation of $j$ from $X_{j}$ to $X$ or the deviation of $\pr(j,A^{r})$ from $X$ to $X_{j}$ is $\alpha$ feasible
}\label{step:alpha}
\STATE $f(X) \leftarrow 0$
\ENDIF
\ENDFOR
\ENDIF
\ENDFOR
\ENDFOR
\ENDFOR
\end{algorithmic}
\end{algorithm}

\begin{lemma}\label{lem:ins}
For each $\alpha \in \{\mbox{NS, INS, IR-INS}\}$, each $i \in N$ and each $X \in \calF_{L}(i)$
such that $X\subseteq\suc(i,A^{r})$ 
we have $f(X)=1$ if and only if there exists an $\alpha$ feasible partition $\pi$ of $\suc(i,A^{r})$ such 
that $X \in \pi$.
\end{lemma}
\begin{proof}
The proof is by induction on $\height(i,A^{r})$. 
The claim is immediate when $\height(i,A^{r})=0$. 
Suppose that it holds for all $j \in N$ with $\height(j,A^{r}) \leq t-1$,
and consider a player $i$ with $\height(i, A^r)=t$. 
Consider an arbitrary $X \in \calF_{L}(i)$ such that $X\subseteq\suc(i,A^{r})$.

Suppose first that $f(X)=1$. Line~\ref{step:IR} ensures that $X$ is individually rational. 
Hence, if $X=\suc(i,A^{r})$, then $X$ is an $\alpha$ feasible partition of $\suc(i,A^{r})$.
Now, suppose that $X \neq \suc(i,A^{r})$, 
i.e., $\ch(X,A^{r}) \neq \emptyset$. Since $f(X)=1$, Line~\ref{step:alpha} 
ensures that for each $j \in \ch(X,A^{r})$
there exists a coalition $X_{j} \in \calF_{L}(j)$ such that
$X_{j}\subseteq \suc(j,A^{r})$, $f(X_{j})=1$ and 
neither the deviation of $j$ from $X_{j}$ to $X$ nor the deviation of $\pr(j,A^{r})$ from $X$ to $X_{j}$ 
is $\alpha$ feasible. By the induction hypothesis, for each $j \in \ch(X,A^{r})$ 
there exists an $\alpha$ feasible partition of $\suc(j,A^{r})$ that contains $X_j$;
combining these partitions with $X$, we obtain an
$\alpha$ feasible partition of $\suc(i,A^{r})$ that contains $X$.

Conversely, if $f(X)=0$, then $X$ is not individually rational, or the condition of the
{\bf if} statement in Line~\ref{step:alpha} is satisfied. 
In either case, there is no $\alpha$ feasible partition of $\suc(i,A^{r})$ containing~$X$.
\end{proof}
Lemma~\ref{lem:ins} immediately implies that the input game admits an $\alpha$ feasible partition
for $\alpha \in \{\mbox{NS, INS, IR-INS}\}$ if and only if $f(X)=1$ for some 
$X \in \calF_{L}(r)$. If this is the case, an $\alpha$ feasible partition can be found using standard
dynamic programming techniques.

It remains to analyze the running time of our algorithm. Let $n=|N|$, $s=|\calF_L|$. 
Algorithm~\ref{alg:INS} considers each coalition $X\in\calF_L$ exactly once.
To check that it is individually rational, it makes at most $n$ oracle calls.
Further, $X$ has at most $n$ children. For each child $j$, the algorithm considers
at most $s$ candidate coalitions $X_j$. To check the conditions in Line~\ref{step:alpha}
for a given $X_j$, we need at most two oracle calls in case of Nash stability and in-neighbor stability
and at most $n$ calls in case of IR-in-neighbor stability. We conclude
that our algorithm performs at most 
$O(n^2s^2)$ oracle calls. 
\end{proof}

The following result shows that we should not hope to obtain a polynomial-time algorithm
for finding in-neighbor stable outcomes, even for additively separable 
hedonic graph games on stars.

\begin{theorem}\label{thm:INS}
Given an additively separable hedonic graph game whose underlying graph is a star,
it is {\em NP}-complete to determine
whether it has an in-neighbor stable feasible partition.
\end{theorem}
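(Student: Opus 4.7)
The proof has two parts.

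For NP-membership, the star topology forces any feasible partition $\pi$ to be uniquely described by $S := \pi(s) \setminus \{s\}$, because the only non-singleton feasible coalitions are those containing the center $s$; all other blocks are singletons $\{v\}$ for $v \notin S$. Given a guess $S$, the list of in-neighbor-feasible deviations to verify is short: each leaf $v \in S$ can only deviate to $\emptyset$; each leaf $v \notin S$ can only attempt to join $\pi(s)$, and its only in-neighbor in $\pi(s)$ is $s$, so it is accepted iff $U(s,v) \ge 0$; the center $s$ may either become a singleton or join some $\{v\}$ with $v \notin S$, and again the acceptance rule reduces to the sign of $U(v,s)$. All of these conditions reduce to polynomially many additive-sum comparisons, giving a polynomial-time verifier.

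For NP-hardness, my plan is a polynomial-time many-one reduction from a suitable NP-complete problem with tight combinatorial structure, with {\sc Exact Cover by 3-Sets} and 3-SAT as the two natural candidates. On a star, any INS outcome boils down to choosing the leaf-set $S$ subject to (i) individual rationality for every $v \in S$, (ii) for each $v \notin S$, either $v$ does not prefer $S \cup \{s,v\}$ to $\{v\}$ or $U(s,v) < 0$, and (iii) analogous constraints on $s$. The reduction would encode the source instance into this constrained selection problem by creating one leaf per combinatorial object (a set in X3C, or a literal in 3-SAT), using large-magnitude negative leaf-leaf utilities to encode incompatibility (intersecting sets, or conflicting literals), and using center-leaf utilities $U(s,\cdot)$ and $U(\cdot,s)$ to simultaneously impose a cardinality constraint on $S$ and specify which leaves $s$ accepts. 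A feature of INS exploited here is that $s$'s disapproval alone can veto an otherwise attractive join, a design knob unavailable for Nash stability and useful for keeping the construction compact.

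The main obstacle is blocking spurious INS partitions when the source instance has no solution: with natural utility choices, the empty coalition $\pi(s) = \{s\}$ (and similarly small $S$'s) tends to satisfy all INS conditions vacuously. My plan to rule this out is to attach a \emph{forcing gadget}, namely one or more auxiliary leaves whose utilities both compel them to demand inclusion in $\pi(s)$ and, once included, destabilize every $S$ that does not encode a valid solution. The delicate balance is to make the gadget ``silent'' on valid encodings---so that the center $s$ can disapprove the gadget leaves whenever a genuine solution is in place, rendering their would-be deviations not in-neighbor feasible---while making it unavoidable on invalid ones. Once the gadget is designed, correctness reduces to a routine case analysis: solutions of the source instance yield INS partitions directly, and conversely every INS partition of the constructed game must traverse the gadget's filter, from which a solution can be read off.
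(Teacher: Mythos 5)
Your NP-membership argument is fine (indeed more than is needed: one can simply verify, for an arbitrary certificate partition, all polynomially many candidate deviations directly). The problem is the hardness direction: what you have written is a plan, not a proof. You correctly identify that the crux is ruling out spurious in-neighbor stable partitions---above all the all-singletons outcome and outcomes where the center's coalition is too small---and you correctly observe that this requires a ``forcing gadget'' whose design involves a delicate balance. But you then leave that gadget unconstructed (``once the gadget is designed, correctness reduces to a routine case analysis''). Since the entire difficulty of the theorem is concentrated in exactly that gadget, the proposal has a genuine gap: no reduction is actually exhibited, and no correctness argument can be checked.

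For comparison, the paper reduces from {\sc Clique} rather than X3C or 3-SAT, which makes the ``incompatibility'' encoding trivial: leaves are the vertices of $G$, non-adjacent pairs get mutual utility $-M$, so individual rationality alone forces the center's coalition (minus the center) to be a clique. The forcing gadget consists of just two extra leaves $a$ and $c$ attached to the center $b$, with \emph{asymmetric} utilities $U(b,a)=t$, $U(a,b)=1$, $U(a,c)=-2$, $U(c,a)=0$, $U(c,b)=2$, $U(b,c)=0$, and $U(b,v)=1$ for each vertex-leaf $v$. Leaf $a$ gives the center a fallback option worth exactly $t$ that $a$ always accepts, so the center abandons any coalition of value less than $t$; leaf $c$ then destabilizes the fallback itself via a cycle of in-neighbor feasible deviations ($c$ joins $\{a,b\}$ and is accepted by $b$; once $c$ is in, $a$'s utility drops below $0$ and $a$ leaves; from $\{b,c\}$ the center prefers to move to $\{a\}$). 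The gadget is automatically ``silent'' when a $t$-clique coalition is in place because the center's utility there is at least $t$ and no deviation is strictly improving. You would need to supply something of comparable precision---with explicit utility values and a full case analysis of every coalition the center could belong to---before your argument establishes NP-hardness.
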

\begin{proof}
Clearly, the problem is in NP since in-neighbor stability can be verified in polynomial time. Again, we reduce from {\sc Clique}. Given an undirected graph $G=(V,E)$ and a positive integer $t$, we construct 
an additively separable hedonic graph game $(N,U,L)$ as follows. We set $N=V\cup\{a, b, c\}$, and use 
the same graph $(N, L)$ as in the proof of Theorem~\ref{thm:SCR:additive} (see Figure~\ref{fig:ex_scr}).

Let $M=|N|+1$. We define the utility function $U:N\times N \rightarrow \bbR$ as follows.
\begin{align*}
& U(a, b)=1, U(a, c)=-2, U(b, a) = t, \\
& U(b, c)=0, U(c, a)=0, U(c, b)=2,\\
&U(a,v)=U(c,v)=U(v, a)=U(v, c)=-M~\mbox{for each}~v\in V,\\
&U(b, v)=1,  U(v,b)=0~\mbox{for each}~v\in V,\\
&U(u, v)=0~\mbox{if}~\{u, v\}\in E\\
&\mbox{and}~U(u,v)=-M~\mbox{otherwise, for all $u,v\in V$}.
\end{align*} 

Suppose first that $G$ contains a clique $C$ of size $t$. Let 
$\pi =\{ \{a\},\{c\}, C\cup \{b\} \}\cup \{\, \{v\} \mid v \in V \setminus C \,\}$. 
Clearly, players $a$ and $c$ do not want to deviate to another coalition. 
Also, no player $v \in V$ can profitably deviate. Player 
$b$ does not want to deviate to $\{a\}$, $\{c\}$, or any singleton in $V\setminus C$ because 
$\sum_{v \in C}U(b,v) \geq t \geq 1$. Thus, $\pi$ is in-neighbor stable.

Conversely, suppose that $\pi$ is an in-neighbor stable feasible partition of $N$. Then $b$ is not 
together with $a$ or $c$ since this would cause in-neighbor feasible deviations by at least 
one of the three players. Thus, in $\pi$ player $b$ is grouped together with players in $V$ only. 
Then, by in-neighbor stability $|\pi(b)\cap V|\ge t$, and by individual rationality 
$\pi(b)\cap V$ is a clique in $G$.
\end{proof}

A similar reduction shows that it is hard to find a Nash stable outcome (for additively separable hedonic games with unrestricted communication, this was shown by Sung and Dimitrov~\cite{Sung2010})

\begin{theorem}\label{thm:NS}
Given an additively separable hedonic graph game whose underlying graph is a star,
it is {\em NP}-complete to determine
whether it has a Nash stable feasible partition.
\end{theorem}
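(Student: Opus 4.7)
I plan to adapt the NP-completeness reduction from Theorem~\ref{thm:INS} with only minor tweaks to the utilities, using the same star graph and \textsc{Clique} as the source problem. Membership in NP is immediate since Nash stability of a feasible partition can be verified in polynomial time by trying every player/target pair. Given an instance $(G=(V,E),t)$ of \textsc{Clique} with $t\ge 2$, I would let $N=V\cup\{a,b,c\}$ and take $L=\{\{b,x\}\mid x\in N\setminus\{b\}\}$ (the star centered at $b$). For a large constant $M=|N|+1$ the utilities will be set so that: (i) $a$ and $c$ strongly dislike every $v\in V$ and dislike each other ($U(a,c)=-2$); (ii) $b$ obtains utility $t$ from $a$ alone ($U(b,a)=t$) and utility $1$ per $v\in V$; (iii) $c$ has a positive draw toward $b$ ($U(c,b)=2$, $U(c,a)=0$), which is what will kill $\{a,b\}$ as a stable pair; and (iv) for $u,v\in V$ we set $U(u,v)=0$ if $\{u,v\}\in E$ and $U(u,v)=-M$ otherwise.

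For the forward direction, given a clique $C$ of size $t$ in $G$, I would verify that the candidate partition $\pi=\{\{a\},\{c\},\{b\}\cup C\}\cup\{\{v\}\mid v\in V\setminus C\}$ is Nash stable. The checks are: $b$'s utility is $t$ in $\pi(b)$, matching the best alternative $\{a,b\}$; $a$ and $c$ get utility $0$ alone, and joining $\pi(b)$ gives them a utility dominated by $-M$ terms while merging $\{a,c\}$ is infeasible (no edge in the star); and each $v\in V\setminus C$ that joins $\pi(b)$ obtains at most $0$ (strictly negative unless $v$ is adjacent to every vertex of $C$), hence never strictly prefers to deviate.

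The converse direction is where the real case analysis lives. Let $\pi$ be any Nash stable feasible partition. I would eliminate every non-target shape of $\pi(b)$ by exhibiting a profitable unilateral deviation: if $a\in\pi(b)$ together with some $v\in V$ then $a$'s utility is $\le 1-M<0$ and $a$ would gain by becoming singleton (symmetrically for $c$); if $\pi(b)=\{a,b,c\}$ then $a$'s utility is $-1<0$; if $\pi(b)=\{a,b\}$ then $c$ has an NS deviation into $\pi(b)$ (gaining $U(c,a)+U(c,b)=2>0$); if $\pi(b)=\{b,c\}$ or $\pi(b)=\{b\}$ then $b$ has an NS deviation to $\{a\}$ (gaining $t>0$). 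Hence $\pi(b)=\{b\}\cup S$ with $S\subseteq V$ nonempty. Individual rationality (implied by NS via the singleton deviation) combined with $U(u,v)\in\{0,-M\}$ forces $S$ to be a clique, and the fact that $b$ does not profit from moving to $\{a\}$ forces $|S|\ge t$.

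The construction and verification are clearly polynomial-time. The main obstacle is getting the case analysis tight: the utilities of $a$, $b$, $c$ must simultaneously kill every ``off-path'' coalition structure for $\pi(b)$ while leaving the intended $\{b\}\cup C$ immune to unilateral deviation. The particular choices $U(b,a)=t$, $U(c,b)=2$, $U(a,c)=-2$ are exactly what makes all four spoiler cases go through while preserving Nash stability of the intended partition; the remaining hazards (two players in $V$ trying to swap, or $a,c$ trying to merge) are automatically ruled out by infeasibility in the star.
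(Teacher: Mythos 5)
Your proposal is correct and matches the paper's approach: the paper proves Theorem~\ref{thm:NS} by reusing the construction from Theorem~\ref{thm:INS} (the same star centered at $b$ with the same utilities $U(b,a)=t$, $U(c,b)=2$, $U(a,c)=-2$, and $-M$ penalties), and your case analysis for the converse direction is the intended one. The only cosmetic difference is that you spell out the Nash-stability verification explicitly, whereas the paper leaves it implicit via the observation that the INS argument already establishes that no player \emph{wants} to deviate.
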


In contrast, for any hedonic game on a star, we can construct an 
IR-in-neighbor stable partition efficiently, though it turns out that, for additively separable games on arbitrary trees, it is NP-complete to compute such stable outcomes. 

\begin{proposition}\label{prop:additive:IRINS}
Every hedonic graph game $(N,(\succeq_{i})_{i \in N},L)$ where $(N,L)$ is a star
has an IR-in-neighbor stable partition, and given oracle access to the players'
preference relations, such a partition can be found using $O(|N|^3)$ oracle calls.
\end{proposition}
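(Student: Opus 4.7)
Since $(N,L)$ is a star with center $s$, every feasible coalition is either a singleton or a subset containing $s$, so every feasible partition has the form $\pi = \{C\} \cup \{\{v\} : v \in N \setminus C\}$ for some $C$ with $s \in C$ (where $C = \{s\}$ corresponds to the all-singletons partition). Writing out the IR-in-neighbor conditions in this setting, $\pi$ is IR-INS if and only if: (a) $C$ is individually rational for every $j \in C$; (b) for each leaf $\ell \in N \setminus C$ at least one of the three conditions $C \cup \{\ell\} \succ_\ell \{\ell\}$, $C \cup \{\ell\} \succeq_s C$, $C \cup \{\ell\} \succeq_j \{j\}$ for all $j \in C$ fails, so that $\ell$'s deviation to $C$ is blocked (note that $s$ is the unique neighbor of $\ell$ in $C$); and (c) for each leaf $\ell \in N \setminus C$, either $C \succeq_s \{s,\ell\}$ or $\{\ell\} \succ_\ell \{s,\ell\}$, so that $s$'s deviation to $\{\ell\}$ is blocked.

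My plan is to produce such a $C$ via a two-phase algorithm. In Phase~1, using $O(|N|)$ oracle calls, identify the leaves $\ell$ for which $\{s,\ell\}$ is IR for both $s$ and $\ell$; among these, pick $\ell^*$ maximizing $\{s,\ell\}$ with respect to $\succeq_s$, and set $C \leftarrow \{s,\ell^*\}$ (or $C \leftarrow \{s\}$ if no such leaf exists). In Phase~2, repeatedly scan the leaves outside $C$ and add any $\ell$ satisfying all of (i) $C \cup \{\ell\} \succeq_j \{j\}$ for every $j \in C \cup \{\ell\}$, (ii) $C \cup \{\ell\} \succeq_s C$, and (iii) $C \cup \{\ell\} \succ_\ell \{\ell\}$; halt when no such leaf exists and output $\pi = \{C\} \cup \{\{v\} : v \in N \setminus C\}$.

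For correctness, (a) is maintained throughout by (i), and (b) is exactly the negation of the halting condition of Phase~2. For (c), note that (ii) together with transitivity gives $C \succeq_s \{s,\ell^*\}$ at termination, so if $\{s,\ell\}$ is IR for both $s$ and $\ell$, the choice of $\ell^*$ in Phase~1 gives $\{s,\ell^*\} \succeq_s \{s,\ell\}$ and hence $C \succeq_s \{s,\ell\}$; if $\{s,\ell\}$ fails IR for $s$, then by (a) we have $C \succeq_s \{s\} \succ_s \{s,\ell\}$; if $\{s,\ell\}$ is IR for $s$ but not for $\ell$, then $\{\ell\} \succ_\ell \{s,\ell\}$ blocks $\ell$'s approval. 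For the running time, Phase~1 takes $O(|N|)$ queries, and Phase~2 runs for at most $|N|$ iterations (since $|C|$ strictly increases whenever a leaf is added); each iteration scans $O(|N|)$ candidate leaves, and each check requires $O(|N|)$ queries to verify the IR of the members of $C \cup \{\ell\}$, giving $O(|N|^3)$ queries overall. The main obstacle is condition (c): Phase~1's commitment to the $\succeq_s$-best feasible pair is crucial, because without it the greedy growth in Phase~2 could terminate at some $C$ that $s$ would gladly abandon for a better pair $\{s,\ell\}$, violating~(c).
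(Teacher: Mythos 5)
Your proposal is correct and follows essentially the same approach as the paper: grow a coalition around the center, starting from the center's most preferred (mutually individually rational) two-player coalition and adding leaves only via IR-in-neighbor feasible deviations, so that the center's utility never decreases and hence no deviation of the center to a singleton can succeed. Your Phase-1 restriction to pairs that are also individually rational for the leaf is a sensible refinement that the paper's one-line description glosses over (without it, the initial partner could end up preferring to leave), but the underlying argument and the $O(|N|^3)$ bound are the same.
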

\begin{proof}
If the central node strictly prefers being on her own, 
rather than being in any coalition of size two, a partition with all singletons is IR-in-neighbor stable. Otherwise, 
choose a favorite two-player coalition of the center, and keep adding players to this coalition 
one by one if this deviation is IR-in-neigh\-bor feasible. 
When no player can be added, the resulting partition is 
IR-in-neigh\-bor stable, since the utility of the central node does not decrease during the execution, and there is no 
player who can IR-in-neighbor deviate to the coalition of the center. 
The bound on the running time is immediate.
\end{proof}

\begin{theorem}\label{thm:IRINS}
Given an additively separable hedonic graph game whose underlying graph is a tree,
it is {\em NP}-complete to determine
whether it has an IR-in-neighbor stable feasible partition.
\end{theorem}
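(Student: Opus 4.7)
My approach is to reduce from the \textsc{Clique} problem, which is NP-complete. Membership in NP is straightforward, since any candidate partition can be checked for IR-in-neighbor stability in polynomial time by going over all feasible single-player deviations. For NP-hardness, the challenge is that the star construction used in the proof of Theorem~\ref{thm:INS} is insufficient, because Proposition~\ref{prop:additive:IRINS} already guarantees an IR-INS partition on any star. Thus I must exploit the full tree structure so that the greedy argument of Proposition~\ref{prop:additive:IRINS} no longer applies.

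The plan is to combine the clique-selection gadget from the proof of Theorem~\ref{thm:INS} with a ``non-existence'' path gadget inspired by Example~\ref{ex2:Parliament}. Starting from an instance $(G=(V,E),t)$, I would build a tree with a central node $b$ attached to pendant nodes $v_1,\ldots,v_n$ corresponding to $V$, using utilities $U(b,v)=1$, $U(v,b)=0$, and $U(u,v)=0$ or $-M$ for $u,v\in V$ depending on whether $\{u,v\}\in E$, exactly as in the proof of Theorem~\ref{thm:INS}, so that in any IR-INS partition $\pi(b)\cap V$ must form a clique of $G$. The role of the auxiliary nodes $a,c$ is then replaced (or augmented) by a short chain attached to $b$, say $b-p_1-p_2-p_3-p_4$, whose edge weights are chosen after the pattern of Example~\ref{ex2:Parliament} so that, viewed as a standalone subgame, the chain admits no IR-in-neighbor stable partition. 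The utilities $U(b,p_1)$ and $U(p_1,b)$ together with $U(b,v)$ for $v\in V$ are calibrated so that $p_1$ is eager to join $\pi(b)$ only when $|\pi(b)\cap V| < t$ and this joining is IR for the $v$'s in $\pi(b)$, while the ``anchor'' is blocked whenever $\pi(b)\cap V$ is already a $t$-clique (either because $p_1$'s utility is then lower in $\pi(b)\cup\{p_1\}$ than in its gadget-internal coalition, or because some $v\in \pi(b)$ is made non-IR by $p_1$'s arrival).

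For the forward direction, if $G$ contains a clique $C$ of size $t$, I would exhibit the partition $\pi=\{C\cup\{b\},\ \{p_1,p_2\},\ \{p_3\},\ \{p_4\}\}\cup\{\{v\}:v\in V\setminus C\}$ (with the precise gadget split determined by the parameters) and verify case by case that no player has an IR-in-neighbor feasible deviation; the IR-constraint is what blocks the deviations internal to the chain that destabilized Example~\ref{ex2:Parliament}. For the converse, I would assume an IR-INS partition $\pi$, show that the chain can only be stabilized when $\pi(b)$ absorbs enough vertices from $V$ to reject $p_1$'s deviation, and conclude that $\pi(b)\cap V$ is a clique of size at least $t$, mimicking the analysis of Theorem~\ref{thm:INS}.

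The main obstacle I anticipate is the simultaneous calibration of the chain gadget so that (i) when $\pi(b)\cap V$ is a $t$-clique, every would-be IR-in-neighbor move by a chain player into $\pi(b)$ is rejected either by its own preference or by some member of $\pi(b)$ for IR reasons, and (ii) when $\pi(b)\cap V$ is not a $t$-clique, the chain becomes destabilized as in Example~\ref{ex2:Parliament}. This requires careful use of large penalty constants $M$ to rule out individually irrational coalitions, together with small tie-breaking utilities that produce the desired strict/indifferent preferences at the clique--chain boundary; showing that no unanticipated feasible coalition (e.g., a subset crossing several branches of the tree) provides a loophole will be the most delicate part of the case analysis.
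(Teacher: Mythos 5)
Your high-level strategy is exactly the paper's: membership in NP is routine, and hardness comes from \textsc{Clique} by grafting a path gadget modeled on Example~\ref{ex2:Parliament} (which has no IR-INS outcome) onto the clique-collecting center of the star from Theorem~\ref{thm:INS}, so that the gadget is pacified precisely when the collector absorbs a $t$-clique. However, what you have written is a plan rather than a proof: the entire technical content of the argument --- the explicit utility matrix, the verification that the partition $\{C\cup\{\text{collector}\}\}$ plus singletons is IR-INS when a $t$-clique $C$ exists, and the case analysis showing that \emph{every} feasible partition is destabilized when no $t$-clique exists --- is deferred to a ``calibration'' that you acknowledge you have not carried out. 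Since the whole difficulty of the theorem lies in that calibration (Proposition~\ref{prop:additive:IRINS} shows the naive gadget cannot work), the proposal as it stands does not establish the result.

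Moreover, your chosen topology makes the calibration harder than it needs to be. You attach the chain $b\hbox{--}p_1\hbox{--}p_2\hbox{--}p_3\hbox{--}p_4$ so that the clique collector $b$ sits at an \emph{end} of the five-node path; then in the ``yes'' direction all the internal deviations of $p_1,\dots,p_4$ (the analogues of $\mathsf{c}$ oscillating between $\{\mathsf{l}\}$ and $\{\mathsf{r}\}$, and of $\mathsf{el},\mathsf{er}$ joining) must be blocked by the sub-path alone, with only $b$'s single boundary move controlled by the clique --- and you have not exhibited a partition of the sub-path with that property. The paper instead places the collector $c$ at the \emph{center} of the path $a\hbox{--}b\hbox{--}c\hbox{--}d\hbox{--}e$, sets $U(c,b)=U(c,d)=t$, $U(c,a)=U(c,e)=-t$, $U(b,d)=U(d,b)=-M$, and copies the end-node preferences of Example~\ref{ex2:Parliament}; then the only deviation that needs the clique to block it is $c$ moving to $\{b\}$ or $\{d\}$ (worth exactly $t$ to $c$, versus $|C|\ge t$ from the clique), the stable partition simply leaves $a,b,d,e$ as singletons, and in the converse direction one first rules out $b\in\pi(c)$ and $d\in\pi(c)$ by the Example-2 analysis and then forces $|\pi(c)\cap V|\ge t$. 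If you want to salvage your end-of-chain layout you would need to supply the missing sub-path-stable partition and re-verify all boundary cases; switching to the centered layout gives you the theorem with a much shorter case analysis.
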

\begin{proof}
The problem is in NP since IR-in-neighbor stability can be verified in polynomial time. Again, we reduce from {\sc Clique}. Given an undirected graph $G=(V,E)$ and a positive integer $t$, we construct 
an additively separable hedonic graph game $(N,U,L)$ as follows. We set $N=V\cup\{a,b,c,d,e\}$, $L=\{\{a,b\},\{b,c\},\{c,d\},\{d,e\}\}\cup \{\, \{c,v\} \mid v \in V\,\}$. Let $M=|N|+1$.
We define the utility function $U:N\times N \rightarrow \bbR$ as follows (see Figure~\ref{fig:irins}).
\begin{align*}
&U(a, b)=-1, U(a, c)=2, U(a, d)=0, U(a, e)=0,\\
&U(b, a)=0, U(b, c)= 0, U(b, d)=-M, U(b, e)=0,\\
&U(c, a)=-t, U(c, b)=t, U(c, d)=t, U(c, e)=-t,\\
&U(d, a)=0, U(d, b)= -M, U(d, c)=0, , U(d, e)=0,\\
&U(e, a)=0, U(e, b)= 0, U(e, c)=2, , U(e, d)=-1\\
&U(a,v)=U(b,v)=U(d,v)=U(e,v)=-M~\mbox{for each}~v\in V,\\
&U(v,a)=U(v,b)=U(v,d)=U(v,e)=-M~\mbox{for each}~v\in V,\\
&U(c, v)=1,  U(v,c)=0~\mbox{for each}~v\in V,\\
&U(u, v)=0~\mbox{if}~\{u, v\}\in E\\
&\mbox{and}~U(u,v)=-M~\mbox{otherwise, for all $u,v\in V$}.
\end{align*}

Note that an individually rational coalitions of this game is either $\{a,b,c\}$, $\{b,c\}$, $\{c,d,e\}$, $\{c,e\}$, or a coalition of the form $C\cup \{c\}$ where $C$ is a clique in $G$, and a singleton of each player.

Suppose first that $G$ contains a clique $C$ of size $t$. Let 
$\pi =\{ \{a\},\{b\},\{d\},\{e\}, C\cup \{c\} \}\cup \{\, \{v\} \mid v \in V \setminus C \,\}$. 
Clearly, players $a$, $b$, $d$ and $e$ have no incentive to deviate to another coalition. 
Also, no player $v \in V$ can profitably deviate. Player 
$c$ does not want to deviate to $\{b\}$, $\{d\}$, or any singleton in $V\setminus C$ because 
$\sum_{v \in C}U(c,v) \geq t \geq 1$. Thus, $\pi$ is IR-in-neighbor stable.

Conversely, suppose that $\pi$ is an IR-in-neighbor stable feasible partition of $N$. Then, $c$ is not 
together with $b$ or $d$. Indeed, if $b \in \pi(c)$, the players $a,b,c,d$, and $e$ would be partitioned into $\{a,b,c\}$ and the singletons $\{d\}$ and $\{e\}$, or $\{b,c\}$ and the singletons $\{a\}$, $\{d\}$, and $\{e\}$ by individual rationality of $\pi$. In either case, this would cause IR-in-neighbor feasible deviations by $c$ or $a$. Similarly, if $d \in \pi(c)$, $\pi$ would admit IR-in-neighbor feasible deviations by $c$ or $e$. Thus, in $\pi$ player $c$ is grouped together with players in $V$ only. Also, $\pi(b)=\{b\}$ and $\pi(d)=\{d\}$ by individual rationality. Then, by IR-in-neighbor stability $|\pi(c)\cap V|\ge t$, and by individual rationality 
$\pi(c)\cap V$ is a clique in $G$.
\end{proof}

\begin{figure}
\centering
\begin{tikzpicture}[scale=0.8, transform shape]
	\def \radius {2.2cm}
	\node[draw, circle](a) at ({150}:1.7*\radius) {$a$};
	\node[draw, circle](b) at ({120}:\radius) {$b$};
	\node[draw, circle](c) at (0,0) {$c$};
	\node[draw, circle](d) at ({60}:\radius) {$d$};
	\node[draw, circle](e) at ({30}:1.7*\radius) {$e$};

	\node[draw, circle,fill=gray!50](node1) at ({195}:\radius) {$v_{1}$};
	\node[draw, circle,fill=gray!50](node2) at ({235}:\radius) {$v_{2}$};
	\node[draw, circle,fill=gray!50](node3) at ({275}:\radius) {$v_{3}$};
	\node[draw, circle,fill=gray!50](node4) at ({345}:\radius) {$v_{n}$};

	\draw[-, >=latex,ultra thick] (a)--(b);
	\draw[-, >=latex,ultra thick] (c)--(b);
	\draw[-, >=latex,ultra thick] (c)--(d);
	\draw[-, >=latex,ultra thick] (d)--(e);
	\draw[-, >=latex,ultra thick] (c)--(node1);
	\draw[-, >=latex,ultra thick] (c)--(node2);
	\draw[-, >=latex,ultra thick] (c)--(node3);
	\draw[-, >=latex,ultra thick] (c)--(node4);

	\draw[-, >=latex,gray] (node2)--(node3);
	\draw[-, >=latex,gray] (node2)--(node4);
	\draw[-, >=latex,gray] (node3)--(node4);

	\node[above] at ({90}:{2}) {$-M$};
	\draw[->, >=latex,dashed] ({70}:\radius)--({110}:\radius);
	\node[below] at ({90}:{1.8}) {$-M$};
	\draw[->, >=latex,dashed] ({110}:2)--({67}:2);
	
	\node[above] at ({136}:2.9) {$0$};
	\draw[->, >=latex,dashed] ({125}:2.55)--({145}:3.6);
	\node[below] at ({138}:2.6) {$-1$};
	\draw[->, >=latex,dashed] ({150}:3.4)--({130}:2.2);

	\node[above] at ({44}:2.9) {$0$};
	\draw[->, >=latex,dashed] ({55}:2.55)--({35}:3.6);
	\node[below] at ({42}:2.6) {$-1$};
	\draw[->, >=latex,dashed] ({30}:3.4)--({50}:2.2);								
				
	\node[left] at (128:1.5) {$t$};
	\draw[->, >=latex,dashed] (-0.32,0.2)--({125}:1.9);
	\node[right] at (112:1.2) {$0$};
	\draw[->, >=latex,dashed] ({115}:1.8)--(-0.025,0.3);

	\node[right] at (52:1.5) {$t$};
	\draw[->, >=latex,dashed] (0.32,0.2)--({55}:1.9);
	\node[left] at (68:1.2) {$0$};
	\draw[->, >=latex,dashed] ({65}:1.8)--(0.025,0.3);

	\node[left] at (163:1.7) {$-t$};
	\draw[->, >=latex,dashed] (-0.6,0.1)--({153}:3.4);
	\node[right] at (148:1.5) {$2$};
	\draw[->, >=latex,dashed] ({150}:3.2)--(-0.35,0.1);

	\node[right] at (17:1.7) {$-t$};
	\draw[->, >=latex,dashed] (0.6,0.1)--({27}:3.4);
	\node[left] at (32:1.5) {$2$};
	\draw[->, >=latex,dashed] ({30}:3.2)--(0.35,0.1);

	\node[left] at ({217}:2.2) {$-M$};
	\draw[<->, >=latex,dashed] (node2)--(node1);

	\node[above] at ({352}:{1.2}) {$1$};
	\draw[->, >=latex,dashed] (0.35,0.07)--({350}:1.8);
	\node[below] at ({335}:{1}) {$0$};
	\draw[->, >=latex,dashed] ({340}:1.75)--(0.25,-0.2);
			
	\node[right] at ({5}:{2.5}) {$-M$};
	\draw[<->, >=latex,dashed] (node4)--({27}:3.5);
	
	\node[below] at ({252}:2.3) {$0$};
	\draw[<->, >=latex,dashed] ({245}:2.3)--({265}:2.3);

	\node[above] at ({87}:3.2) {$0$};
	\draw[<->, >=latex,dashed]({35}:4) arc [radius=6.5, start angle=60, end angle= 120];

	\draw[dotted,thick] (0.7,-\radius+0.5) arc [radius=2, start angle=290, end angle= 320];
\end{tikzpicture}
\caption{Graph used in the proof of Theorem~\ref{thm:IRINS}. Thick black lines represent communication links between players, whereas gray lines stand for edges of the given instance $G$. Values on dashed arcs are utilities of players.
\label{fig:irins}}
\end{figure}
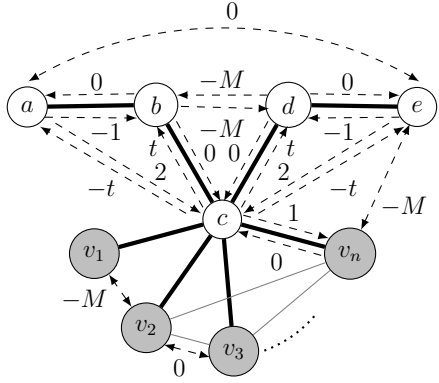

\subsection{Computational complexity of INS:\\ 
Symmetric additively separable games}\label{sec:pls}
\noindent
Note that the additively separable hedonic game used in the proof of Theorem~\ref{thm:NS} is not symmetric (see construction in the proof of Thm.~\ref{thm:INS}).
Indeed, Bogomolnaia and Jackson \cite{Bogomolnaia2002} observe that
in symmetric additively separable hedonic games, any NS deviation strictly increases the sum of all players' 
utilities $\sum_{i \in N}\sum_{j \in \pi(i)}U(i,j)$. This implies that 
for this class of games a sequence of NS deviations
converges to a Nash stable outcome. Thereby, the set of Nash stable outcomes
(and hence also INS, IR-INS, and IS outcomes) is always non-empty.
However, the number of deviations needed to reach a Nash stable outcome
may be exponential in the number of players, so it remains unclear if a
Nash stable outcome can be computed efficiently. 

The complexity class that appears to be useful for capturing the complexity
of this problem is PLS (Polynomial Local Search)~\cite{Johnson1988}. 
A problem in PLS consists of a finite set of candidate solutions, each of which has associated 
neighborhood and cost. It is specified by three polynomial-time algorithms. The first algorithm computes an initial candidate solution (e.g. the all-singleton partition). The second algorithm returns the cost of each candidate solution (e.g. the social welfare of a partition). Finally, the third algorithm tests whether a given candidate solution is optimal in its neighborhood, and if not, finds a solution with better cost (e.g. an improved partition after a profitable deviation).
Given two PLS problems $A$ and $B$, we say that $A$ is {\em PLS-reducible} to $B$ if there exist 
polynomial time computable functions $f$ and $g$ 
such that $f$ maps instances of $A$ to $B$ and $g$ maps the local optima of $B$ to local optima of $A$. 

Gairing and Savani show that
search problems related to NS and IS for symmetric additively separable games are PLS-complete 
\cite{Gairing2010,Gairing2011}. However, if one were to interpret the hedonic game in their reduction
as a graph game, the underlying graph would necessarily contain cycles.
In what follows, we will show that computing in-neighbor stable outcomes for symmetric additively separable games 
is PLS-complete even when the graph $(N,L)$ is a star.

\begin{theorem}\label{thm:sym:INS}
Given a symmetric additively separable hedonic graph game whose underlying graph is a star,
it is {\em PLS}-complete to find an in-neighbor stable feasible partition.
\end{theorem}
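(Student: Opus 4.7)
The plan is to prove membership in PLS and PLS-hardness separately. For membership, I would use the social welfare $\Phi(\pi)=\sum_{C\in\pi}\sum_{\{i,j\}\subseteq C}U(i,j)$ as the potential function. In any symmetric additively separable game, when a player $i$ deviates from $\pi(i)$ to $X\cup\{i\}$, every pair not involving $i$ is unaffected, so $\Phi$ changes by exactly $\sum_{j\in X}U(i,j)-\sum_{j\in\pi(i)\setminus\{i\}}U(i,j)$, which is $i$'s own utility gain; since any INS deviation is in particular an NS deviation, this gain is strictly positive. The PLS instance is thus defined with feasible partitions as candidate solutions, the all-singleton partition as the initial candidate, $\Phi$ as the cost, and the neighborhood of $\pi$ consisting of partitions reachable via a single INS deviation. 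A better neighbor (or a certificate that none exists) can be computed in polynomial time by enumerating each player $i$ and each $X\in\pi\cup\{\emptyset\}$ and checking the INS conditions directly.

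For hardness, I would reduce from \textsc{Max-Cut} under the flip-one-vertex (FLIP) neighborhood, which is PLS-complete~\cite{Schaffer1991}. Given $G=(V,E)$, construct the symmetric additively separable star game on $N=V\cup\{s,w\}$ with $L=\{\{s,v\}\mid v\in V\cup\{w\}\}$ and utilities $U(s,w)=H$ for a sufficiently large constant $H$ (say $H>|V|+2|E|$), $U(s,v)=\deg_G(v)$ for $v\in V$, $U(w,v)=0$ for $v\in V$, $U(u,v)=-2$ when $\{u,v\}\in E$, and $U(u,v)=0$ otherwise for $u,v\in V$. The key invariant is that every INS partition has the form $\pi_S=\{\{s,w\}\cup S\}\cup\{\{v\}\mid v\in V\setminus S\}$ for some $S\subseteq V$: if $w$ were a singleton, it would have an INS-feasible deviation into $\pi(s)$ (automatically approved by $s$, since $U(s,w)=H\geq 0$), so $w\in\pi(s)$ in any INS outcome; and the center $s$ has no INS-feasible deviation either, because for $|\pi(s)|\geq 3$ the remainder $\pi(s)\setminus\{s\}$ is disconnected on the star, while in the remaining case $\pi(s)=\{s,w\}$ any move of $s$ loses at least $H-\max_v\deg_G(v)>0$ in utility.

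A direct calculation gives $\Phi(\pi_S)=H+|E(S,V\setminus S)|$, so $\Phi$ equals the cut value of $S$ up to an additive constant. The only remaining INS-feasible moves are leaves of $V$ flipping in or out of the center coalition: a leaf $v\notin S$ joining is strictly beneficial for $v$ iff $\deg_G(v)>2|N_G(v)\cap S|$ (and $s$ automatically approves as $U(s,v)\geq 0$), while a leaf $v\in S$ leaving to a singleton is strictly beneficial for $v$ iff $\deg_G(v)<2|N_G(v)\cap(S\setminus\{v\})|$ (no approval is needed). Both conditions say exactly that flipping the side of $v$ strictly increases the cut of $S$ in $G$. Hence INS stability of $\pi_S$ is equivalent to $S$ being a FLIP-local optimum of MAX-CUT on $G$, and the map $g(\pi_S)=S$ together with the construction above constitutes a PLS-reduction. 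The main obstacle is rigorously verifying the center/anchor invariant: one needs to rule out every feasible deviation of $s$ or $w$ in every reachable partition, carefully handling the connectivity constraints of the star and the cases $|\pi(s)|\in\{1,2,\geq 3\}$; once this invariant is established, the equivalence with MAX-CUT FLIP is immediate from the symmetric-ASHG arithmetic.
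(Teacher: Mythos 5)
Your membership argument and the overall shape of your reduction (a star centered at $s$, leaves identified with the vertices of $G$, the center liking every leaf, and leaves penalizing each other so that the welfare of a partition equals the cut value up to an additive constant) match the paper's proof. But there is a genuine flaw in the choice of source problem: you reduce from \emph{unweighted} \textsc{Max-Cut} under the FLIP neighborhood --- your utilities are $U(u,v)=-2$ on edges and $U(s,v)=\deg_G(v)$, and your potential is $\Phi(\pi_S)=H+|E(S,V\setminus S)|$, all unweighted quantities. Unweighted \textsc{Max-Cut}/FLIP is \emph{not} PLS-complete; it is solvable in polynomial time, because each improving flip increases the integer-valued cut size by at least $1$ and the cut size is bounded by $|E|$, so local search itself terminates after at most $|E|$ flips. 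The Sch\"affer--Yannakakis PLS-completeness result is for \emph{weighted} \textsc{Max-Cut}, where the weights can be exponentially large in the input size. As written, your reduction therefore establishes nothing about PLS-hardness. The fix is exactly what the paper does: take a weighted instance $(V,E,w)$ and set $U(s,u)=\sum_{v\in V}w(u,v)$ and $U(u,v)=-2w(u,v)$ for $\{u,v\}\in E$; all of your arithmetic then goes through with $|N_G(v)\cap S|$ replaced by $\sum_{v'\in S}w(v,v')$, and the welfare tracks the weighted cut.

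Two smaller remarks. First, your anchor vertex $w$ and the ``center invariant'' you flag as the main obstacle are unnecessary: a PLS-reduction only requires that the map $g$ send every local optimum of the constructed game to a local optimum of \textsc{Max-Cut}, so it suffices to extract two \emph{necessary} conditions from in-neighbor stability --- individual rationality of the members of $\pi(s)\cap V$ (no improving flip out of $S$) and the absence of a profitable join by members of $V\setminus\pi(s)$ (no improving flip into $S$, with $s$'s approval automatic since $U(s,\cdot)\ge 0$). There is no need to characterize all INS partitions, to prove the converse equivalence, or to rule out deviations by the center; this is how the paper avoids the case analysis you describe. Second, existence of a better neighbor in your PLS formulation should be certified by an improving \emph{INS deviation} (which strictly increases $\Phi$ by the symmetric-ASHG identity you state), which you do correctly.
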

\begin{proof}
The candidate solutions of our problem correspond to feasible partitions of the players, and the cost of each partition corresponds to the sum of players' utilities. Two partitions are {\it neighbors} if one can be obtained from the other by moving a single player to another coalition. With symmetric additively separable preferences, we are able to check in polynomial time local optimality of a partition and, if it is not optimal, find a player who can deviate to another coalition and returns the resulting outcome (which has greater social welfare). Hence, our problem is clearly in PLS. 

To prove PLS-hardness, we provide a reduction from {\sc Local Max-Cut}, 
which is known to be PLS-complete \cite{Schaffer1991}. Recall that an instance of {\sc Local Max-Cut}
is given by a weighted graph $G=(V,E,w)$, where $w:V\times V\to{\mathbb N}$ is the weight function with the convention that $w(u,v)=0$ for each $\{u,v\} \not \in E$. 
A {\em cut} is a partition of $V$ into two parts $S$
and $V\setminus S$; its {\em weight} is given by $\sum_{u\in S, v\in V\setminus S}w(u, v)$.
The neighborhood of a cut $(S, V\setminus S)$ is defined as the set of all cuts that can be obtained
by moving one node from $S$ to $V\setminus S$ or vice versa; the goal is to find a cut that has the maximum
weight in its neighborhood. 

Given an instance $(V, E, w)$ of {\sc Local Max-Cut}, we set $N=V\cup\{s\}$,
$L=\{\, \{s,v\} \mid v \in V \,\}$ and construct a symmetric additively separable 
graph game $(N,U,L)$ with the utility matrix $U:N\times N \rightarrow \bbR$ defined as follows. 
We set $U(u,s)=\sum_{v \in V} w(u,v)$ for each $u\in V$. For every pair of distinct nodes 
$u,v \in V$, we define $U(u,v)=-2 w(u,v)$ if $\{u,v\} \in E$ 
and $U(u,v)=0$ otherwise. 
Note that every $u\in V$ will be in a coalition with $s$ or on her own in any feasible partition. Moreover, player $s$ 
will accept all in-neighbor feasible deviations since her utility for every other player is non-negative.

Let $\pi$ be an in-neighbor stable feasible partition of $N$. Let $V_{1}=\{\, u \in V 
\mid u \in \pi(s) \,\}$ and $V_{2}=V\setminus V_1$. 
Each player $u \in V_{1}$ has non-negative utility 
$U(u,s) + \sum_{v \in V_{1}}U(u,v)$ by individual rationality. Hence, $\sum_{v \in V_{1}} 
w(u,v) \leq \sum_{v \in V_{2}} w(u,v)$ for every $u \in V_{1}$. On the other hand, no player 
$u \in V_{2}$ wants to deviate to $\pi(s)$: by a similar calculation, $\sum_{v \in V_{1}} 
w(u,v) \geq \sum_{v \in V_{2}}w(u,v)$ for every $u \in V_{2}$. Thus, $(V_{1},V_{2})$ is a local 
max-cut of $G$.
\end{proof}
The construction in the proof of Theorem~\ref{thm:sym:INS} can also be used to 
show PLS-completeness of finding a Nash stable partition in this class of games.

\begin{theorem}\label{thm:sym:NS}
Given a symmetric additively separable hedonic graph game whose underlying graph is a star,
it is {\em PLS}-complete to find a Nash stable feasible partition.
\end{theorem}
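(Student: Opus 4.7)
The strategy is to follow the same two-step template as Theorem~\ref{thm:sym:INS}: first place the problem in PLS via a social-welfare potential function, then establish PLS-hardness by reusing the reduction from Local Max-Cut already constructed in that proof.

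For membership, I would set up the PLS instance so that candidate solutions are feasible partitions of $N$, the cost of $\pi$ is the social welfare $\sum_{i\in N}\sum_{j\in\pi(i)}U(i,j)$, and two feasible partitions are declared neighbors if one is obtained from the other by moving a single player to a different coalition (possibly empty). By the classical observation of Bogomolnaia and Jackson cited just above, in a symmetric ASHG any NS-deviation strictly increases this potential; hence the local optima of this PLS instance coincide exactly with the Nash stable feasible partitions. Testing local optimality reduces to checking whether some player has a feasible profitable unilateral move, which is done in polynomial time. This places the problem in PLS.

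For hardness, I would use the exact same mapping $f$ as in the proof of Theorem~\ref{thm:sym:INS}, taking a weighted graph $G=(V,E,w)$ to the symmetric additively separable graph game $(N,U,L)$ described there; for the solution map $g$, I would send a Nash stable feasible partition $\pi$ to the cut $(V_1,V_2)$ with $V_1 = \pi(s)\cap V$ and $V_2 = V\setminus V_1$. The key observation is that the set of in-neighbor feasible deviations is always contained in the set of NS-feasible deviations, so Nash stability implies in-neighbor stability. Consequently, the calculation in the proof of Theorem~\ref{thm:sym:INS}---which translates the ``no $u\in V$ wants to deviate'' conditions into the local max-cut inequalities $\sum_{v\in V_1}w(u,v)\le\sum_{v\in V_2}w(u,v)$ for $u\in V_1$ and the reverse for $u\in V_2$---applies verbatim to an NS stable~$\pi$.

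I do not expect a serious obstacle: the result is essentially a corollary of Theorem~\ref{thm:sym:INS} combined with the standard welfare-monotonicity argument for symmetric ASHGs, and both $f$ and $g$ are polynomial-time computable by inspection. The only mildly delicate point is to make sure that the center $s$ has no profitable NS-deviation, but this follows from the fact that $U(s,v)=\sum_{w\in V}w(v,w)\ge 0$ for every $v\in V$: any move by $s$ that would leave a disconnected remnant is not a feasible deviation, while the remaining potential moves by $s$ are already ruled out by the welfare-improvement property at a local optimum of the PLS potential. Hence the same construction establishes PLS-hardness for Nash stability.
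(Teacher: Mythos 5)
Your proposal is correct and matches the paper, which proves this theorem simply by observing that the construction from Theorem~\ref{thm:sym:INS} carries over; your key point that every Nash stable partition is in-neighbor stable (since in-neighbor feasible deviations form a subset of NS feasible deviations) is exactly what makes the hardness map $g$ work verbatim. The worry about the center $s$ is superfluous for the hardness direction, since there one starts from a partition already assumed to be Nash stable and only needs to read off the cut.
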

However, we cannot extend Theorem~\ref{thm:sym:NS} to enemy-oriented games.

\begin{proposition}\label{prop:sym:enemy:NS}
A Nash stable feasible outcome of
a symmetric enemy-oriented game on a star can be computed in polynomial time.
\end{proposition}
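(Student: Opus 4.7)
The plan is to output the following greedy partition, which can be constructed in polynomial time. Let $s$ be the center of the star, so that $L=\{\,\{s,v\}\mid v\in N\setminus\{s\}\,\}$ and every feasible coalition is either a singleton or a subset of $N$ containing $s$. Initialize $X\leftarrow\{s\}$; while there exists $v\in N\setminus X$ with $U(v,u)=1$ for every $u\in X$, set $X\leftarrow X\cup\{v\}$; then return $\pi=\{X\}\cup\{\,\{v\}\mid v\in N\setminus X\,\}$. The loop adds at most $|N|$ players, and each iteration can be carried out in polynomial time. By construction $X$ is a clique of mutual friends, so $\pi$ is feasible and individually rational.

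To verify Nash stability I case-split on the deviator. For any $v\neq s$, the star structure forces every feasible coalition containing $v$ together with another player to also contain $s$; hence the only feasible NS deviations by $v$ target $\pi(s)=X$ or $\emptyset$. If $v\in X\setminus\{s\}$, the utility of $v$ in $X$ equals $|X|-1\geq 1>0$, so becoming a singleton is not beneficial. If $v\notin X$, the loop did not add $v$, so some $u\in X$ satisfies $U(v,u)=-|N|$, which gives $v$ a utility of at most $(|X|-1)-|N|<0$ in $X\cup\{v\}$, below the singleton value of $0$.

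The remaining case is a deviation by $s$ to some $\{v\}\in\pi$ or to $\emptyset$. If $X=\{s\}$, the loop halted immediately, so every other player is an enemy of $s$ and $U(s,v)=-|N|<0$ for every $v\in N\setminus\{s\}$. Otherwise $|X|\geq 2$ and $s$'s utility in $X$ is $|X|-1\geq 1$, while the utility of $s$ in $\{s,v\}$ equals $U(s,v)\leq 1\leq|X|-1$, with strict inequality whenever $|X|\geq 3$. When $|X|=2$, equality $U(s,v)=1=|X|-1$ can occur only if $v$ is a friend of $s$ that was rejected because of an enemy in $X\setminus\{s\}$; this yields a tie rather than the strict improvement required for an NS deviation. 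Therefore no feasible NS deviation exists and $\pi$ is Nash stable. The only real subtlety is the bookkeeping for the two boundary values $|X|\in\{1,2\}$ when analyzing the center $s$.
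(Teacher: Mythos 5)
Your construction and argument coincide with the paper's proof: grow a clique of mutual friends greedily from the center and leave everyone else as a singleton, then observe that outsiders incur a $-|N|$ penalty from some member of $X$ while insiders (including the center) cannot strictly improve. Your write-up merely spells out the boundary cases $|X|\in\{1,2\}$ that the paper leaves implicit; the proof is correct.
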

\begin{proof}
We initialize coalition $X$ to the center of the star;
as long as there is a player that likes (and is liked by) all current members of $X$,
we add him to $X$. Eventually, no player can be added to $X$.
At this point, $\{X\}\cup\{\{i\}\mid i \in N \setminus X\}$
is a Nash stable feasible partition: no player outside of $X$ wants to deviate to $X$,
and no player in $X$ wants to leave.
\end{proof}

\section{Conclusions and Future Work}
\noindent
We have explored the existence and computational complexity of stable partitions in hedonic games 
on acyclic graphs. We obtained a number of algorithmic results in the general oracle-based
framework, thereby showing that acyclicity of the communication network
has important implications for stability.
It remains unknown whether a strictly core stable partition for a hedonic game on a tree
can be computed in time polynomial in the number of connected coalitions and whether an IR-in-neighbor stable partition for a symmetric additively separable game on a tree can be found in time
polynomial in the number of players; we leave these problems as directions for future work.
It would also be interesting to see if our algorithms can be extended to graphs
that are ``almost'' acyclic, and, more broadly,
if there are constraints on the communication
structure other than acyclicity that lead to existence/tractability results for common 
hedonic games stability concepts. 


\bibliographystyle{acm}

\end{document}